\documentclass[a4paper,UKenglish,cleveref,autoref,thm-restate,nolineno]{socg-lipics-v2021}

\pdfoutput=1 %
\hideLIPIcs

\usepackage{graphicx}
\usepackage[table]{xcolor}

\usepackage{amsmath}
\usepackage{amssymb}
\usepackage{mathtools}
\usepackage{nicefrac}
\usepackage{microtype}
\usepackage{todonotes}
\usepackage{cite}

\theoremstyle{definition}%
\newtheorem{ilpdef}{ILP}%



\usepackage{csquotes} %
\usepackage{marvosym} %

\colorlet{mygreen}{green!70!black}
\colorlet{myblue}{cyan!90!black}
\colorlet{myorange}{orange!75!white}

\usepackage{booktabs} %
\usepackage{multirow} %
\usepackage{array}    %
\usepackage{calc}     %

\usepackage{xurl} %
\Crefname{lemma}{Lemma}{Lemmas}
\Crefname{observation}{Observation}{Observations}
\Crefname{corollary}{Corollary}{Corollaries}
\Crefname{definition}{Definition}{Definitions}
\Crefname{case}{Case}{Cases}
\Crefname{ilpdef}{ILP}{ILPs}
\crefalias{subequations}{ilp}%
\Crefname{ilp}{ILP}{ILPs}
\crefname{ilp}{ILP}{ILPs}

\usepackage[abbreviations,shortcuts]{glossaries-extra}

\newglossary*{problem}{Problems}
\newcommand{\newproblem}[3]{
    \newglossaryentry{pr:#1}{type=problem,
        name={\ensuremath{#2}},
        description={#3},
        sort={#1}
    }
}
\newcommand{\prob}[1]{\glsentryname{pr:#1}}
\newcommand{\probf}[1]{\glsentrydesc{pr:#1} (\glsentryname{pr:#1})}
\newcommand{\probl}[1]{\glsentrydesc{pr:#1}}

\newabbreviation{DAG}{DAG}{directed acyclic graph}
\newabbreviation{MCPS}{CPS}{minimum capacity-preserving subgraph}
\newabbreviation{ISP}{ISP}{Internet service provider}
\newabbreviation{ECMP}{ECMP}{Equal Cost Multipath}
\newabbreviation{IGP}{IGP}{Interior Gateway Protocol}
\newabbreviation{MLU}{MLU}{maximum link utilization}
\newabbreviation{MPLS}{MPLS}{Multiprotocol Label Switching}
\newabbreviation{OSPF}{OSPF}{Open Shortest Path First}
\newabbreviation{RSVP}{RSVP}{Resource Reservation Protocol}
\newabbreviation{TE}{TE}{traffic engineering}
\newabbreviation{DFS}{DFS}{depth-first search}
\newabbreviation{CSP}{CSP}{constrained shortest path}

\newproblem{MCPS}{\textsf{CPS}}{\textsf{Capacity-Preserving Subgraph}}
\newproblem{MSPND}{\textsf{SPND}}{\textsf{Shortest-Path Network Design}}
\newproblem{TOCA}{\textsf{TOCA}}{\textsf{Traffic-Oblivious Cable Activation}}
\newproblem{DSND}{\textsf{DSND}}{\textsf{Directed Survivable Network Design}}
\newproblem{MSRND}{\textsf{SRND}}{\textsf{Segment-Routing Network Design}}
\newproblem{F-}{\textsf{F-}}{\textsf{Fixed-Shortest-Paths}}
\newproblem{FCND}{\textsf{FCND}}{\textsf{Fixed-Charge Network Design}}

\let\Union\bigcup

\renewcommand{\epsilon}{\varepsilon}
\renewcommand{\rho}{\varrho}
\newcommand{\bigO}{\ensuremath{\mathcal{O}}}

\newcommand{\Npos}{\ensuremath{\mathbb{N}_{\geq1}}}

\newcommand{\be}{\ensuremath{\coloneqq}}
\newcommand{\ssep}{\mid}

\newcommand{\aarc}{\ensuremath{a}}
\newcommand{\arcs}{\ensuremath{A}}

\newcommand{\pathset}{\ensuremath{\mathcal{P}}}

\DeclarePairedDelimiter\ceil{\lceil}{\rceil}

\newcommand{\ecap}{\ensuremath{\textit{fcap}}}
\newcommand{\maxflow}[1]{\ensuremath{c_{#1}}}
\newcommand{\stretcha}{\varrho}

\newcommand{\mcfalpha}{\varrho}

\newcommand{\demand}{\ensuremath{T}}

\newcommand{\paredges}[1]{\ensuremath{\lambda_{#1}}}

\newcommand{\minpar}{\ensuremath{\paredges{\text{min}}}}

\newcommand{\ccap}{\ensuremath{\textit{ccap}}}

\newcommand{\len}{\ensuremath{\textit{len}}}
\newcommand{\ppath}{\ensuremath{P}}
\newcommand{\stpaths}{\ensuremath{\pathset_{s,t}}}
\newcommand{\termpairs}{\ensuremath{K}}
\newcommand{\pless}{\ensuremath{\prec}}
\newcommand{\subpath}[2]{\ensuremath{#1\langle#2\rangle}}
\newcommand{\numconns}{\ensuremath{\chi}}

\newcommand\algspnd{\emph{spnd}}
\newcommand\algspndone{\ensuremath{\emph{spnd}^{\,\circ}}} %
\newcommand\algcps{\emph{cps}}
\newcommand\algmcf{\emph{toca}}
\newcommand\algmcfpp{\ensuremath{\emph{toca}^+}}
\newcommand{\cost}{\ensuremath{\textit{dcost}}}
\newcommand{\effcap}{\ensuremath{\textit{ecap}}}

\usepackage{tikz}
\usetikzlibrary{arrows,
    backgrounds,
    shapes,
    calc,
    decorations,
    decorations.pathreplacing,
    decorations.pathmorphing,
    positioning}
\tikzset{main_edge/.style={line width=2.3pt}}
\tikzset{dot/.style={circle,fill=black,inner sep=1pt,minimum size=1pt}}
\tikzset{mylabel/.style={
    fill opacity=0, draw opacity=0, text opacity=1, inner sep=0, outer sep=3.5pt}}
\tikzset{edge/.style={
    rounded corners, -stealth, draw=black, very thick, shorten <= 2pt,shorten >= 2pt}}

\title{Designing Capacitated Subnetworks for Shortest~Path~Routing}

\author{Markus Chimani}{Theoretical Computer Science, Osnabrück University,
Osnabrück, Germany}{markus.chimani@uos.de}{https://orcid.org/0000-0002-4681-5550}{}%

\author{Max Ilsen\footnote{Corresponding author.}}{Theoretical Computer Science, Osnabrück University,
Osnabrück, Germany}{max.ilsen@uos.de}{https://orcid.org/0000-0002-4532-3829}{}

\authorrunning{M. Chimani and M. Ilsen} %

\Copyright{Markus Chimani and Max Ilsen} %

\ccsdesc[500]{Mathematics of computing~Network flows}
\ccsdesc[300]{Networks~Network design and planning algorithms}

\keywords{Network design, Shortest path routing, Column generation, Experimental comparison} %

\category{} %

\relatedversion{} %

\funding{Funded by the Deutsche Forschungsgemeinschaft (DFG, German Research Foundation) grant 461207633 (CH 897/7-2).}

\begin{document}
\maketitle

\begin{abstract}
    In pursuit of higher energy efficiency in computer networks,
    one subfield of \emph{green traffic engineering} aims at reducing the size
    of a network during times of low traffic, while still guaranteeing the
    ability to route all occurring demands. In this setting, we  have to simultaneously solve a network design problem (choosing connections to deactivate) and a routing problem (routing paths in the active subnetwork, adhering to some routing protocol).
    Interestingly, there seems to be no available method to tackle the problem as a whole for the simplest (and still most commonly used) routing paradigm: shortest path routing. State-of-the-art methods either do not consider capacities, or assume that the routing paths should not change when deactivating network connections, or separate the problem into its two constituents, first solving the network design problem (using some estimators in lieu of the precise routing protocol) and only then the actual routing problem.

    In this paper, we present an algorithm to tackle the full combined problem exactly via a novel integer linear program, modeling dynamically changing shortest paths. To solve it, we need to devise a special-purpose column generation method. To speed up the solution process, we further propose additional provably strengthening constraints.

    Now having the means to yield true optimal solutions for (small) practical instances, we can for the first time give an in-depth experimental evaluation that includes the absolute quality (or quality loss) intrinsic to the above simplifying algorithms.
    It turns out that the arguably simplest method---first computing a routing, fixing it, and then turning off all superfluous connections---yields solutions surprisingly close to the true optimum in practice.
    When considering multiple different traffic demands, a recent traffic-oblivious approach (TOCA) performs best, while being comparatively straight-forward to implement.

\end{abstract}

\newpage
\section{Introduction}\label{sec:intro}

Data traffic in modern backbone networks of \acp{ISP} showcases distinct
peaks during the day and evenings, and lows at night and in the early
morning~\cite{DBLP:conf/lcn/SchullerACHS17,DBLP:conf/infocom/HassidimRSS13}.
Green traffic engineering approaches exploit this by turning off unnecessary
resources during traffic lows, thereby saving energy and money.
In this paper, we examine the emerging network design problems w.r.t.\ two different network models:
in the setting of simplex (full-duplex) communication,
we model a network as a (bi)directed graph~$G=(V,\arcs)$ with positive (symmetric) arc capacities~$\ccap \colon \arcs\to\Npos$ and arc lengths~$\len \colon \arcs\to\Npos$, where each vertex corresponds to a router
and each arc (each pair of two opposite-facing arcs, respectively) to a link.
Each link with a corresponding arc~$\aarc\in\arcs$ is understood to consist of~$\paredges{\aarc} \in \Npos$ many \emph{connections}, each of which provides a capacity of~$\ccap$ and is capable of being deactivated separately. The full link, with all connections activated, thus has a capacity of $\ecap(\aarc)\coloneqq\paredges{\aarc}\cdot\ccap(\aarc)$.

A \emph{traffic demand} of a vertex pair $(s,t)\in V^2$ is the amount of data (\emph{flow} in standard algorithmic terms) that needs to be sent from $s$ to $t$. All these demands are encoded in a (non-symmetric) \emph{traffic matrix} $\demand\in\mathbb{N}^{|V|\times|V|}$.
The \emph{terminal pairs} $\termpairs \be \{(s,t) \in V^2 \ssep \demand(s,t) > 0\}$ are those vertex pairs that have a positive associated
traffic demand.
There are several routing strategies:
\emph{shortest-path routing (SPR)}, codified as \ac{OSPF}, is the most commonly employed \ac{IGP} in
real-world traffic engineering scenarios~\cite{DBLP:journals/rfc/rfc2328}. There, traffic is always sent along a shortest path w.r.t.\ the arc lengths~$\len$, which are usually called \emph{\ac{IGP} weights} or \emph{link metric}.
The newer, not yet as widely adopted,
\emph{segment routing (SR)}~\cite{DBLP:journals/rfc/rfc8402} chooses intermediary vertices (in many cases only one per traffic demand~\cite{DBLP:conf/infocom/BhatiaHKL15})
and routes traffic by concatenating shortest paths through them.
In general, routing over arbitrary paths via \emph{multi-commodity flow (MCF)} is not feasible due to
hardware constraints~\cite{DBLP:journals/ojcs/OttenICA26}.
We say~$\demand$ is \emph{SPR-routable} (\emph{SR-routable}, \emph{MCF-routable}) in $G$ if the respective routing protocol is able to satisfy all traffic demands without violating the capacities in~$G$.
In contrast to the other two, already testing whether~$\demand$ is SR-routable in $G$ is NP-hard~\cite{DBLP:conf/cp/HartertSVB15}.

Given a network~$(G,\ccap,\len,\paredges{})$ and a traffic matrix~$\demand$,
the \probf{MSPND} %
problem consists of finding a vector $\numconns\in\mathbb{N}^{|\arcs|}$ with $\numconns_\aarc\leq\paredges{\aarc}$ for all $\aarc\in\arcs$ and minimum %
$\sum_{\aarc\in\arcs}\numconns_\aarc$ such that $\demand$ is SPR-routable in the \emph{subnetwork} $(G,\ccap,\len,\numconns)$. Intuitively, we say link $\aarc\in\arcs$ has $\numconns_\aarc$ \emph{active} connections, and aim to minimize the total number of active connections in the network.
We may analogously define the \probf{MSRND} problem w.r.t.\ SR-routability.

The established problem probably most similar to \prob{MSPND} is \probl{FCND}~\cite{crainic2021networkdesign}, where there are some known algorithms based on bi-level programs when combined with shortest path routing~\cite{DBLP:journals/cor/BourasFPZ19,DBLP:journals/cor/SilvaSMMS16}. However, there, no capacities are enforced on any arcs, but the objective function sums fixed arc costs for activating an arc plus arc costs per unit of traffic routed over an arc. As such, these results are not applicable to the \prob{MSPND} scenario.

To the best of our knowledge, there are no known approaches that directly tackle the true \prob{MSPND} or \prob{MSRND} problems, as the combination of deciding on a capacitated subnetwork simultaneously with solving the routing problem is quite intricate. We want to emphasize the major difficulty intrinsic already to shortest paths routings due to their non-monotonicity in subnetworks: Assume some traffic matrix~$\demand$ is \emph{not} SPR-routable in some network;
by deactivating a single link, even though
this decreases the overall capacity of the network, $\demand$ may become indeed SPR-routable, as traffic demands
would now be routed over different paths with possibly higher residual capacity.
This phenomenon is also known as Braess’s~paradox~\cite{DBLP:journals/mmor/Braess68}.

In practice, the problems are typically considered in simplified forms:
The most straightforward approach is to stipulate that the shortest routing paths are not allowed to change in the subnetwork. We call this the \probl{F-} problem variant, marked by the prefix \prob{F-}.
Then \prob{F-}\prob{MSPND} becomes trivial as one simply performs a standard SPR-routing and deactivates superfluous connections. In contrast, \prob{F-}\prob{MSRND} still is NP-hard even with this stipulation, but there is an ILP-based exact algorithm~\cite{DBLP:conf/lcn/OttenBSA23} that can be used in practice for reasonably sized graphs. However, requiring to stick to the shortest-path routes of the full network effectively means that no link used by the initial routing can ever be completely deactivated, and much potential for energy saving seems wasted. Indeed, the optimal solution to \prob{F-}\prob{MSPND} can be a factor of~$\Theta(|V|)$ larger than an optimal solution to \prob{MSPND}:
consider a network on a complete digraph with unit arc lengths and large capacities per connection; while for the latter a spanning cycle on $\bigO(|V|)$ arcs, with one connection each, may suffice, the former may have to retain at least one connection for all $\bigO(|V|^2)$~many original links.

The alternative idea is to split the problem into two subproblems that are solved sequentially: first, one establishes an SPR- or SR-routable subnetwork, then one computes the actual routing. The crux is, of course, in order to guarantee routability in the first step, one would already need to know the routing of step two. The key idea is to consider \emph{traffic-oblivious} schemes in the first step.

A \emph{traffic-aware} approach requires the knowledge of a specific traffic matrix that should be routed in the subnetwork. This approach has the drawback of overfitting for this specific traffic matrix, when in reality, one only has (rough) estimates of typical traffic demands.
\emph{Traffic-oblivious} approaches mitigate this issue and simultaneously allow us to algorithmically split \prob{MSPND} (and \prob{MSRND}) into two phases as sketched above. Such approaches are based on the underlying assumptions that (a) the original network (i.e., with all connections active) is well-designed to route the traffic at peak hours, and (b) the traffic at low hours is, mostly, a scaled-down traffic of that at peak hours. The idea is to compute the subnetwork using only the estimated scale factor~$\stretcha\in(0,1)$, but no specific traffic matrix as additional input. Such approaches thus separate the problem of the network deactivation and the actual routing into separately addressable problems, but of course are only feasible in practice if the underlying assumptions (a) and (b) are (at least mostly) valid.

\subparagraph{Our Contribution.}
In \Cref{sec:TO-algs} we discuss the two known traffic-oblivious concepts \probf{MCPS} and \probf{TOCA}, both of which still require to solve NP-hard problems: while there are established practical algorithms for the latter, we present the first algorithmic approach to compute exact solutions for the former.

Up to now it was impossible to truly estimate the absolute quality performance of either of the described solution methods, as there is no known way to directly solve the NP-hard true \prob{MSPND} or \prob{MSRND} problems. In \Cref{sec:mspnd}, we present the first exact approach to solve the former. It is based on a novel path-based integer linear program (ILP) formulation that requires a column generation scheme to solve. We further present strengthening inequalities that allow us to find provably optimal solutions for small to medium sized instances. Yet we can also conclude that already \prob{MSPND}, without the added complexity of \prob{MSRND}'s segment routing, is very challenging in practice.

Our algorithm allows us to conduct the first practical study on established network benchmark sets to gauge how much potential is wasted by following the \probl{F-} approach or either of the traffic-oblivious approaches, see \Cref{sec:experiments}.
The %
trivial \prob{F-}\prob{MSPND} approach performs exceptionally well in practice; in the traffic-oblivious scenario, simple \prob{TOCA}-algorithms perform best.

\section{Traffic-Oblivious Algorithms}\label{sec:TO-algs}

\subsection{Minimum Capacity-Preserving Subgraphs (\prob{MCPS})}\label{sec:mcps}
\newcommand{\cutout}{\ensuremath{\delta^-}}

Given a (directed) graph $G$ with arc capacities $\ecap\colon A\to \Npos$,
the \probf{MCPS} problem~\cite{CPSJournal} asks for an arc-minimum subgraph $H\subseteq G$ where
the capacities~$\maxflow{G}(s,t)$ and $\maxflow{H}(s,t)$ of minimum $s$-$t$-cuts (w.r.t.\ $\ecap$) in $G$ and $H$, respectively, differ by a factor of at most $\stretcha\in(0,1)$, i.e., $\maxflow{H}(s,t)\geq\stretcha\cdot\maxflow{G}(s,t)$ for all $(s,t)\in V^2$.
As the value of a minimum $s$-$t$-cut is equal to the value of a maximum
$s$-$t$-flow~\cite{DantzigFulkerson}, we have that, for each $(u,v)$ \emph{individually}, we may send at least $\stretcha$ as much flow in $H$ as in $G$.
In~\cite{CPSJournal}, exact algorithms to compute such subgraphs for specific
polynomial-time solvable instance families are established. For general instances, the possibility to adapt an ILP given by
Dahl~\cite{DBLP:journals/telsys/Dahl93,DBLP:journals/dam/Dahl93} for \probl{DSND} is briefly mentioned.

We adapt the scenario to our setting, taking multiple connections per link and possibly full-duplex links into account, and yield the following ILP that allows us to devise the first algorithm solving
\prob{MCPS} for general directed multigraphs. We call this algorithm \algcps{} in the following.
For a vertex subset $W\subset V$, let~$\cutout(W)\coloneqq \{uv\in \arcs : u\in W$, $v\not\in W\}$ be all arcs leaving $W$. Observe that $\maxflow{G}(s,t)$ is the size of a minimum-$s$-$t$-cut in $G$, where each arc $\aarc\in\arcs$ has individual capacity $\paredges{\aarc}\cdot\ccap(\aarc)$.
\begin{subequations}
\begin{align}
    \min &\sum_{\aarc \in \arcs}x_\aarc & \notag\\
    \sum_{\aarc \in \cutout(W)} \ccap(\aarc) \cdot x_\aarc &\geq \stretcha\cdot \maxflow{G}(s,t) & \forall (s,t)
    \in V^2, \forall W \subset V \colon s \in W, t \notin W\label{eq:ilp_mcps_cut}\\
    x_\aarc & \in \{0,1,...,\paredges{\aarc}\} &\forall \aarc \in \arcs \label{eq:ilp_mcps_integer_arc}
\end{align}
\end{subequations}

The integer $x$-variables, whose sum we minimize, directly map to our solution description $\numconns$.
In case of full-duplex links, we simply set $x_{uv} = x_{vu}$ for each link $\{u,v\}$.
The constraints~\eqref{eq:ilp_mcps_cut} for each
$(s,t)\in V^2$ ensure that all $s$-$t$-cuts in the solution (and thus in particular also the minimum ones) are sufficiently large.

Since the number of constraints~\eqref{eq:ilp_mcps_cut} is exponential in the input size, we need to facilitate a branch-and-cut scheme to speed up the ILP solving process. As per usual,
we start with an empty set of \eqref{eq:ilp_mcps_cut}~constraints and, whenever an LP relaxation in the
branch-and-bound tree is solved, add constraints that are violated by the
current model. Finding violated constraints is done via the following separation routine:
For each $\aarc\in\arcs$, let $\effcap(\aarc)
\be \ccap(\aarc)\cdot \hat x_\aarc$ be the \emph{effective capacity}
induced by the current solution~$\hat x$.
Then, for each terminal pair~$(s,t)\in V^2$, we compute a maximum
$s$-$t$-flow in $(G,\effcap)$ using Goldberg and Tarjan's preflow
algorithm~\cite{DBLP:journals/jacm/GoldbergT88}, terminating as soon as a flow
of value at least~$\stretcha\cdot \maxflow{G}(s,t)$ is found.
If no such flow exists, we compute two violated cuts and add their
respective constraints to the ILP model: the \emph{front cut} close to~$s$ and
the \emph{back cut} close to~$t$.
To compute the front cut, we traverse arcs in the residual network (i.e.,
forward arcs with positive residual capacity and backward arcs with positive
flow) via \ac{DFS} starting at~$s$ and mark all visited vertices.
Arcs of the front cut are then exactly those arcs~$uv$ where~$u$ was marked
and~$v$ can be reached via a backwards \ac{DFS} starting at~$t$ in the original (not the residual) graph using only unmarked vertices. The back cut is identified analogously with the roles of $s$ and $t$
swapped and running \acp{DFS} in the respective opposite directions. This process guarantees that if we do not find any constraint to add, then indeed all constraints are satisfied even if not actively generated into the model.

\newcommand{\remconns}{\ensuremath{\psi}} %
During separation, among cuts with the same cut value, we prefer those of lower
cardinality as this gives us a slight speed advantage. We achieve this by adding a miniscule positive value $\varepsilon< \nicefrac{1}{|\arcs|}$ to each arc capacity. This choice of $\varepsilon$ guarantees that if one cut has a smaller unperturbed capacity than another, the same still holds after perturbation.
We also include another optimization: Before starting the branch-and-cut
procedure, we precompute for each arc $\aarc=st\in\arcs$, whether the removal of some $\remconns_\aarc$-many connections of~$\aarc$ would lower
the maximum $s$-$t$-flow in $G$ to a value below $\stretcha\cdot
\maxflow{G}(s,t)$. If so, we can add the constraints $x_\aarc \geq \remconns_\aarc+1$.
Vertex pairs $(s,t) \in V^2$ whose capacity requirement is already reached by these lower bounds never have to be considered during
the separation routine.

\subsection{Traffic-Oblivious Cable Activation (TOCA)}\label{sec:toca}

In the \probf{TOCA} problem~\cite{DBLP:journals/corr/abs-2601-13087,DBLP:conf/isaac/ChimaniI25}, we ask for a subnetwork with minimum
number of active connections (again encoded by a vector $\numconns\in\mathbb{N}^A$) and the following property:
for every possible traffic matrix~$\demand$ that is MCF-routable in the full network $(G,\ccap,\len,\paredges{})$, the
scaled-down traffic matrix~$\mcfalpha\cdot\demand$ must be MCF-routable in
$(G,\ccap,\len,\numconns)$. Solving the underlying ILP turns out to be too time-consuming in practice~\cite{DBLP:journals/corr/abs-2601-13087}, so
we implement two algorithm variants proposed in \cite{DBLP:journals/corr/abs-2601-13087} based on LP rounding, which are supposed to yield close-to-optimum solutions:
First, a $\max(\frac{1}{\mcfalpha\cdot\minpar}, 2)$-approximation, where \(\minpar\coloneqq\min_{a\in\arcs} \paredges{a}\) is the minimum number of connections between any two
connected routers; we call this algorithm \algmcf{}.
It consists of computing a specific MCF LP where
the arc utilization is minimized, and rounding up the resulting fractional
variable values $x_\aarc$ for each arc~$\aarc\in\arcs$ to obtain the respective
number~$\numconns_\aarc\be\ceil{x_\aarc}$ of active connections.
Second, we also implemented a heuristic recommend
in~\cite{DBLP:journals/corr/abs-2601-13087} to further improve solutions
produced by \algmcf{}; we call it \algmcfpp{}:
After solving the MCF LP, set the lower (upper) bound of each
$x$-variable to the floor (ceiling, resp.) of its current value.
Then, as long as the solution is non-integral,
choose an arc~$\aarc$ with the smallest difference of its $x$-variable to the
next higher integer value, fix the $x$-variable to that value, and recompute the
LP.
As this heuristic computes up to $|\arcs|$ many LPs instead of only one, it
is a lot slower than \algmcf{}. Yet, it retains the approximation guarantee and produces feasible
solutions with significantly fewer connections.

\section{Exact \probl{MSPND}}\label{sec:mspnd}

We now devise the first ILP-based approach to solve the full \prob{MSPND} problem exactly. We make two assumptions:
First, the traffic demands are SPR-routable in the full network, without any deactivations.
As we consider the traffic demands to correspond to some low-traffic phase for which we want to deactivate superfluous connections, the fully activated network can safely be assumed to be capable of handling higher traffic during peak phases.
Second, we assume that shortest paths are unique, i.e., for any vertex pair~$(s,t)\in V^2$, the arc length function~$\len$ induces
a total order~$\pless$ on the $s$-$t$-paths. We can ensure this by enforcing an
arbitrary tie-breaking between $s$-$t$-paths of the same length.
We may start with considering the following ILP for \prob{MSPND}. For any $(s,t)\in K$, let~$\stpaths$ denote the set of all $s$-$t$-paths in~$G$.
\newcommand{\pv}{z} %
\newcommand{\iv}{y} %

\begin{subequations}
\begin{align}
    \min &\sum_{\aarc \in \arcs}x_\aarc & \notag
    \\
    \sum_{\ppath \in \stpaths} \pv_\ppath &\geq 1 & \forall (s,t) \in \termpairs \label{eq:ilp_mspnd_connectivity}\\
    \sum_{\substack{\ppath \in \stpaths \colon\\ \aarc \in \ppath}} \pv_\ppath &\leq \iv_\aarc  &\forall (s,t) \in \termpairs, \aarc\in \arcs \label{eq:ilp_mspnd_edgebuying}\\
    \sum_{(s,t) \in \termpairs}\sum_{\substack{\ppath \in \stpaths \colon\\ \aarc\in\ppath}}
    \demand(s,t) \cdot \pv_\ppath &\leq \ccap(\aarc) \cdot x_\aarc   &\forall \aarc\in\arcs \label{eq:ilp_mspnd_cap}\\
    |\ppath| - \sum_{\aarc \in \ppath} \iv_\aarc &\geq \sum_{\substack{\ppath' \in \stpaths \colon\\\ppath \pless \ppath'}} \pv_{\ppath'}
        &\forall (s,t) \in \termpairs, \ppath \in \stpaths \label{eq:ilp_mspnd_shortest}\\
    \iv_\aarc &\leq x_\aarc &\forall\aarc\in \arcs \label{eq:ilp_mspnd_conn_down}\\
    x_\aarc  &\leq \paredges{\aarc}\cdot \iv_\aarc &\forall\aarc\in \arcs \label{eq:ilp_mspnd_conn_up}\\    %
    x_\aarc & \in \{0,1,\dots,\paredges{\aarc}\} &\forall \aarc \in \arcs\label{eq:ilp_mspnd_integer_conn}\\
    \iv_\aarc & \in \{0,1\} &\forall \aarc \in \arcs \label{eq:ilp_mspnd_integer_arc}\\
    \pv_\ppath & \in \{0,1\} &\forall (s,t) \in \termpairs, \ppath \in \stpaths \label{eq:ilp_mspnd_integer_path}
\end{align}
\end{subequations}
The integer variables $x_\aarc$ encode how many connections per arc are active. They directly map to our solution vector $\numconns$ and are thus minimized.
The binary variables $\iv_\aarc$ indicate whether any connection along $\aarc$ is active, in which case we may say the \emph{arc} is active. This property is enforced via \eqref{eq:ilp_mspnd_conn_down} and \eqref{eq:ilp_mspnd_conn_up}.

For every $(s,t)\in\termpairs$, we have a binary $\pv_\ppath$ variable for each $\ppath\in\stpaths$, which indicates whether~$\ppath$ is active, i.e., whether it is the
shortest $s$-$t$-path in the solution subgraph.
Constraint~\eqref{eq:ilp_mspnd_connectivity} establishes that each terminal pair
is connected by an active path in the solution;
constraint~\eqref{eq:ilp_mspnd_edgebuying} guarantees that all arcs on an active path are
active as well;
constraint~\eqref{eq:ilp_mspnd_cap} ensures that the total traffic
routed over any arc does not exceed its capacity.
Correct shortest path routing is achieved via constraint~\eqref{eq:ilp_mspnd_shortest}: it ensures that, when all arcs on an $s$-$t$-path~$\ppath$ are active, no $s$-$t$-path longer than $\ppath$ can be active.

Above, we considered general simplex-connections;
for full-duplex links,
we simply enforce $x_{st} = x_{ts}$ and $\iv_{st} = \iv_{ts}$ for all $st\in\arcs$.
We could also enforce $\pv_{\ppath} = \pv_{\ppath'}$  for all paths~$\ppath$ and their
respective reverse paths~$\ppath'$, however, a pilot study showed that
this only slows the algorithm down when using the strengthening
inequalities of \Cref{sec:strength}.

In any case, we obtain the LP relaxation by replacing
constraints~\eqref{eq:ilp_mspnd_integer_conn}--\eqref{eq:ilp_mspnd_integer_path} by:
\begin{align}
    0 &\leq x_\aarc \leq \paredges{\aarc} &\forall \aarc \in \arcs \tag{\ref{eq:ilp_mspnd_integer_conn}'}\label{eq:ilp_mspnd_frac_conn}\\
    0 &\leq \iv_\aarc \leq 1 &\forall \aarc \in \arcs \tag{\ref{eq:ilp_mspnd_integer_arc}'}\label{eq:ilp_mspnd_frac_arc}\\
    0 &\leq \pv_\ppath  &\forall (s,t) \in \termpairs, \ppath \in \stpaths \tag{\ref{eq:ilp_mspnd_integer_path}'}\label{eq:ilp_mspnd_frac_path}
\end{align}
The constraint~\eqref{eq:ilp_mspnd_edgebuying} already implies $\pv_\ppath \leq 1$
for all paths~$\ppath$.

Our ILP contains a variable %
for every path in the
input digraph and thus the total number of variables is exponential in $|A|$. This prohibits to straight-forwardly solve the ILP with standard means.
We follow the \emph{branch-and-price} paradigm~\cite{DBLP:books/daglib/0023873}: Starting with an initial \emph{model} that only contains a subset of the variables (and constraints), we repeatedly solve the LP relaxation and add variables not yet in the model whose addition may allow to improve the current solution value (a process called \emph{column generation} or \emph{pricing}). When the latter fails, we resort to \emph{branching}, i.e., we pick any variable with fractional value $f$ and generate two subproblems which enforce $\leq\lfloor f\rfloor$ or $\geq\lceil f\rceil$ on the domain of the variable, respectively.

We observe that in our ILP we never have to enforce integrality on the $\pv$-variables: whenever we have an optimal solution with integral $x$- and $\iv$-variables, it is never beneficial to choose a non-binary value for any $\pv$-variable.
Thus, our scheme never branches on $\pv$-variables.

\subsection{Column Generation}

Our initial model contains all $x$- and $\iv$-variables.
For every terminal pair~$(s,t) \in\termpairs$, let $\stpaths'\subseteq \stpaths$ denote the set of $s$-$t$-paths whose $\pv$-variables are in the current model. Initially, $\stpaths'$ contains the five shortest $s$-$t$-paths in~$G$.
Whenever a new LP solution is found, we start our \emph{pricing routine} to identify and add beneficial $\pv$-variables. When adding a variable to the model, observe that this automatically also adds the corresponding constraint~\eqref{eq:ilp_mspnd_shortest} and updates the other constraints according to the now larger subsets $\stpaths'$.

We consider the dual program to our LP.
As per usual, a variable is (potentially) beneficial if and only if its dual constraint is violated by the current dual solution. If no such variable exists, we have found an LP solution that is optimal w.r.t.\ the full LP, not only the current model~\cite{DBLP:books/daglib/0023873}.
\newcommand{\dualcon}{\ensuremath{\alpha}}%
\newcommand{\dualcap}{\ensuremath{\gamma}}%
\newcommand{\dualeb}{\ensuremath{\beta}}%
\newcommand{\dualsht}{\ensuremath{\delta}}%
For brevity, and since it can canonically be obtained from the above LP, we refrain from presenting the full dual program here. We are only interested in the dual constraints corresponding to our (primal) $\pv$-variables.
Let $\dualcon$, $\dualeb$, $\dualcap$, and $\dualsht$ be the
dual variables corresponding to the constraints \eqref{eq:ilp_mspnd_connectivity}, \eqref{eq:ilp_mspnd_edgebuying}, \eqref{eq:ilp_mspnd_cap}, and \eqref{eq:ilp_mspnd_shortest}, respectively. Recall that there are no explicit upper bound constraints $\pv_\ppath\leq 1$, so the dual constraint is:
\begin{equation}
    \dualcon_{s,t}
    - \sum_{\aarc \in \ppath} \dualeb_{s,t,\aarc}
    - \demand(s,t) \cdot \sum_{\aarc \in \ppath} \dualcap_{\aarc}
    - \sum_{\ppath' \in \stpaths\colon \ppath' \pless \ppath}
    \dualsht_{s,t,{\ppath'}}
    \leq 0 \quad
    \forall (s,t) \in \termpairs, \ppath\in\stpaths. \label{eq:dual_mspnd_path}
\end{equation}

Consequently, for each terminal pair~$(s,t)\in\termpairs$, we search for a path~$\ppath$ that
violates this dual constraint.
At the same time, we aim to find paths of the optimal solution early
on to speed up the solving process, and such paths tend to be short w.r.t.\ the
arc length~$\len$.
Hence, our \emph{pricing problem} consists of finding a shortest path (w.r.t.~$\len$)
among those paths whose total dual cost is below the limit~$\dualcon_{s,t}$.
This kind of pricing problem, the \emph{\ac{CSP}} problem,
is commonly found in column generation algorithms for network design problems
such as spanners~\cite{DBLP:conf/esa/SigurdZ04}.
However, there is a crucial difference in our pricing problem that complicates
the process:
Since the variables $\dualsht_{s,t,{\ppath'}}$ are not only dependent on the
terminal pair~$(s,t)$ but also on other paths~$\ppath'$ we must define our
\emph{dual cost function}
$\cost(\aarc) \be \sum_{\aarc \in \ppath} \dualeb_{s,t,\aarc} + \demand(s,t) \cdot \sum_{\aarc \in \ppath} \dualcap_{\aarc}$, for all~$\aarc\in\arcs$, without them.
Since all dual values are non-negative, the set $\stpaths^{\leq\cost}$  of $\cost$-feasible $s$-$t$-path is thus a strict superset of those paths that have a truly violated dual constraint.
It also contains paths that are not violated and thus may already be in our current model. In particular, the shortest path (w.r.t.\ $\len$) among $\stpaths^{\leq\cost}$ may indeed be of that type. In our scheme, we need to account for this to ensure that we identify a path of $\stpaths^{\leq\cost}\setminus\stpaths'$ if any exist.

\newcommand{\queue}{\ensuremath{Q}}
\newcommand{\heap}{\ensuremath{H}}
\newcommand{\distlen}{\ensuremath{\overline{\len}}}
\newcommand{\distcost}{\ensuremath{\overline{\cost}}}
We solve the pricing problem for each terminal pair~$(s,t)\in\termpairs$ using a
label setting algorithm:
First, we may abort the algorithm immediately if already the previous attempt (w.r.t.\ $(s,t)$) was unsuccessful and we can further infer from the
total reductions in arc costs and cost limit that a new path adhering to the
new cost limit cannot exist.
Otherwise, for all vertices~$v\in V$, we precompute
the cost~$\distcost(v,t)$ of a $\cost$-minimum $v$-$t$-path (disregarding $\len$)
and the length~$\distlen(s,v)$ of a shortest (w.r.t.\ $\len$)  $s$-$v$-path (disregarding $\cost$).
Then, during the actual path computation, we manage both a global priority
queue~$\queue$ with vertices and a heap~$\heap_v$ with labels at every
vertex~$v$.
Each such label~$\ell$ corresponds to a path~$\ppath_\ell$ from~$s$ to
vertex~$v(\ell)$ with
 length~$\distlen(\ell) \be \sum_{\aarc \in \ppath_\ell} \len(\aarc)$
and dual cost~$\distcost(\ell) \be \sum_{\aarc \in \ppath_\ell} \cost(\aarc)$.
When queried for its top element, a heap~$\heap_v$ returns the label~$\ell$ with
the lexicographical minimum according to~$(\distlen(\ell) - \distlen(s,v(\ell)),\ \distlen(s,v(\ell)),\ \distcost(\ell))$,
and~$\queue$ returns the vertex with the lexicographical minimum label among all heaps.
We thus first process one label per vertex~$v$ in the order of~$\distlen(s,v)$, i.e., in the order of Dijkstra's algorithm~\cite{DBLP:journals/nm/Dijkstra59}, before moving on to other labels. This allows us to find the shortest (w.r.t.\ $\len$) path
violating \eqref{eq:dual_mspnd_path}
first.

While $\queue$ is non-empty, we pop the top vertex~$v(\ell)$ from~$\queue$
and the top label~$\ell$ from~$\heap_{v(\ell)}$, reinserting~$v(\ell)$
into~$\queue$ only if $\heap_{v(\ell)}$ contains more labels to process.
If
$v(\ell) = t$, we check whether
path~$\ppath_\ell\in\stpaths'$ already, and accordingly discard $\ell$ or
return~$\ppath_\ell$.
Otherwise, we scan all of $v(\ell)$'s neighbors~$w$ where $w$ is not yet
in~$\ppath_\ell$ (to avoid cycles) and~$w$ satisfies~$\distcost(\ell) +
\cost(v(\ell)w) + \distcost(w,t) < \dualcon_{s,t}$.
Then, we create a new label $\ell'$ where $\ppath_{\ell'}$ is obtained by
appending~$v(\ell)w$ to~$\ppath_\ell$, and insert~$\ell'$ into~$\heap_{w}$ (and $w$ into $\queue$ if it is not yet contained).

In contrast to other \ac{CSP} problems, we cannot simply discard a label~$\ell'$ if it is \emph{dominated} (i.e., another label in $\heap_{v(\ell')}$ has smaller cost and length), as
dominating labels may only expand to $s$-$t$-paths that already are
in~$\stpaths'$.
However, to speed up the algorithm, we run the pricing routine for $(s,t)$ once
discarding dominated labels (and thus avoiding the need to explicitly check for
cycles), and only if no new paths are found this way, we run it again without the domination check.

\subsection{Strengthening Inequalities}\label{sec:strength}

A (feasible) constraint \emph{polyhedrally strengthens}~\cite[Def.~13.2]{DBLP:books/daglib/0023873} a minimizing ILP formulation if there are instances such that the value of the corresponding LP relaxation increases when adding the constraint to the model. In other words, while not necessary from the point of integer feasible solutions, adding the constraint to the model can yield stronger bounds in the branch-and-bound framework, and thus potentially lead to faster running times.

Let $\subpath{\ppath}{u,v}$ denote the subpath of a path~$\ppath$ that starts at vertex~$u$ and ends at vertex~$v$.
While not necessary to model
\prob{MSPND}, these inequalities strengthen the ILP formulation by establishing the
well-known property that subpaths of shortest paths are also shortest~paths:
\begin{align}
        \pv_{\subpath{\ppath}{s,v}} &\geq \pv_{\ppath}
&        \pv_{\subpath{\ppath}{v,t}} &\geq \pv_{\ppath}
             &\forall (s,t) \in \termpairs, \ppath \in \stpaths, v \in V(\ppath)
 \label{eq:ilp_mspnd_subpath}
\end{align}

\newcommand{\sol}{\ensuremath{S}}
\newcommand{\satarc}{\ensuremath{\bar{\aarc}}}
\newcommand{\Athin}{\ensuremath{\arcs_{\text{thin}}}}
\newcommand{\Athick}{\ensuremath{\arcs_{\text{thick}}}}
\begin{figure}[tbp]
    \centering
    \begin{tikzpicture}[scale=1]
            \begin{scope}[every node/.style={dot}]
                \node[label={[label distance=1mm]270:$s$}] (a) at (0,0) {};
                \node (b) at (1.5,0.5) {};
                \node[label={[label distance=1mm]270:$v$}] (c) at (3,0) {};
                \node (d) at (4,0) {};
                \node (e) at (5,0) {};
                \node[label={[label distance=1mm]270:$t$}] (f) at (6,0) {};
                \node (a0) at (0,1.5) {};
                \node (a1) at (1,1.5) {};
                \node (b1) at (2,1.5) {};
                \node (c1) at (3,1.5) {};
                \node (d1) at (4,1.5) {};
                \node (e1) at (5,1.5) {};
                \node (e0) at (6,1.5) {};
            \end{scope}
            \node[draw=none,above=-0.1 of b] {$\ppath^2_{s,v}$};
            \node[draw=none,above=-0.1 of c1] {$\ppath^3_{s,t}$};

            \begin{scope}[every edge/.style={edge, thin}]
                \path (a) edge[bend left=15] (b);
                \path (b) edge[bend left=15] (c);
                \path (a) edge node [below] {$\ppath^1_{s,v}$}  (c);
            \end{scope}

            \begin{scope}[every edge/.style={edge}]
                \path (c) edge (d);
                \path (d) edge node [below] {$\ppath^1_{s,t}$} node [above] {$\ppath^2_{s,t}$} (e);
                \path (a) edge (a0);
                \path (a0) edge (a1);
                \path (a1) edge (b1);
                \path (b1) edge (c1);
                \path (c1) edge (d1);
                \path (d1) edge (e1);
                \path (e1) edge (e0);
                \path (e0) edge (f);
                \path (e) edge (f);
            \end{scope}
    \end{tikzpicture}
    \caption{\prob{MSPND} instance considered in the proof of \Cref{th:strengthening_inequality}.
        Arcs $\Athin$ ($\Athick$) are drawn as thin (thick) lines and have capacity 1 (2, respectively).}
    \label{fig:strengthening_inequality}
\end{figure}

\begin{theorem}\label{th:strengthening_inequality}
    Constraint~\eqref{eq:ilp_mspnd_subpath} is polyhedrally strengthening, even when all arc lengths are~1.
\end{theorem}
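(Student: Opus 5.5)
The plan is to use the instance of \Cref{fig:strengthening_inequality} with all arc lengths $1$ and $\paredges{\aarc}=1$ for every arc, so that $x_\aarc$ and $\iv_\aarc$ coincide in integral solutions. The paths are then ordered by length as $\ppath^1_{s,v}\pless\ppath^2_{s,v}$ (lengths $1,2$) and $\ppath^1_{s,t}\pless\ppath^2_{s,t}\pless\ppath^3_{s,t}$ (lengths $4,5,8$). I would set demands $\demand(s,v)=1$ and $\demand(s,t)=2$, so $\termpairs=\{(s,v),(s,t)\}$. Since $\len\equiv 1$, this also covers the ``even when all arc lengths are~1'' part of \Cref{th:strengthening_inequality}.

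The demand $2$ of $(s,t)$ cannot be routed through the thin arcs $\Athin$, because they have capacity $1$. Integrally, $(s,t)$ must therefore use $\ppath^3_{s,t}$, and constraint~\eqref{eq:ilp_mspnd_shortest} forces some arc of $\ppath^1_{s,t}$ and of $\ppath^2_{s,t}$ to be switched off. The pair $(s,v)$ keeps the thin part alive, so the thick chain $v\to\cdots\to t$ must be broken.

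The proof would then have two parts:

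\begin{enumerate}
\item \textbf{A cheap fractional point violating~\eqref{eq:ilp_mspnd_subpath}.} I would construct a feasible solution of the LP relaxation (constraints \eqref{eq:ilp_mspnd_connectivity}--\eqref{eq:ilp_mspnd_conn_up} together with \eqref{eq:ilp_mspnd_frac_conn}--\eqref{eq:ilp_mspnd_frac_path}) that exploits the decoupling of $(s,v)$ and $(s,t)$. The idea is to put positive weight on $\ppath^1_{s,t}$ while $(s,v)$ is routed mostly over $\ppath^2_{s,v}$, or vice versa, with the thick chain only fractionally bought. This is exactly the kind of point where $\pv_{\ppath^1_{s,t}}>\pv_{\ppath^1_{s,v}}=\pv_{\subpath{\ppath^1_{s,t}}{s,v}}$, so it violates~\eqref{eq:ilp_mspnd_subpath}. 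I would search this point by hand, with $\iv$-values such as $\nicefrac12$ on suitable arcs so that the right-hand sides of~\eqref{eq:ilp_mspnd_shortest} can stay nonzero. I would then compute its objective value $\sum_\aarc x_\aarc$.

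\item \textbf{A lower bound on the strengthened LP.} I would show that every solution of the relaxation augmented by~\eqref{eq:ilp_mspnd_subpath} has strictly larger value. For this I would combine four ingredients:
\begin{itemize}
\item the subpath inequalities $\pv_{\ppath^1_{s,v}}\ge\pv_{\ppath^1_{s,t}}$ and $\pv_{\ppath^2_{s,v}}\ge\pv_{\ppath^2_{s,t}}$;
\item the capacity constraints~\eqref{eq:ilp_mspnd_cap} on the thin arcs, which bound $\pv_{\ppath^1_{s,t}},\pv_{\ppath^2_{s,t}}$ in terms of the $(s,v)$ load;
\item the shortest-path constraints~\eqref{eq:ilp_mspnd_shortest} for $\ppath^1_{s,v}$, $\ppath^1_{s,t}$ and $\ppath^2_{s,t}$;
\item connectivity~\eqref{eq:ilp_mspnd_connectivity}.
\end{itemize}
From these I would derive that $\pv_{\ppath^3_{s,t}}$ must be large. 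Once it is, the eight thick arcs of $\ppath^3_{s,t}$ need $x_\aarc\ge\pv_{\ppath^3_{s,t}}$, which gives the bound.
\end{enumerate}

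The cleanest way to present the lower bound is probably as an explicit nonnegative combination of these constraints, i.e., a dual certificate. In a first attempt I would compute both LP optima numerically on this small instance and read off the certificate.

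The main obstacle is this lower bound. Constraint~\eqref{eq:ilp_mspnd_shortest} only restricts longer paths when the $\iv$-values along a shorter path are nearly $1$, so fractional solutions have much freedom. It may also turn out that the demands or capacities need tuning, for instance $\demand(s,t)=2$ versus thick capacity $2$, so that the base LP genuinely gains from decoupling $(s,v)$ and $(s,t)$. If my first choice of demands yields equal LP values, I would adjust them, or the length of $\ppath^3_{s,t}$, until the gap appears.
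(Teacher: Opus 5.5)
You have the right instance, and it works as is with no retuning. But neither half of the argument is carried out, and the violation you target first is impossible.

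\textbf{The cheap fractional point.} Since $\paredges{}=1$ forces $x=\iv$, three constraints of the original relaxation combine as follows:
\begin{itemize}
\item the shortest-path constraint \eqref{eq:ilp_mspnd_shortest} for $\ppath^1_{s,v}$ gives $\pv_{\ppath^2_{s,v}}\le 1-x_{sv}$;
\item the capacity constraint \eqref{eq:ilp_mspnd_cap} of the thin arc $sv$ gives $x_{sv}\ge\pv_{\ppath^1_{s,v}}+2\pv_{\ppath^1_{s,t}}$;
\item connectivity gives $\pv_{\ppath^1_{s,v}}+\pv_{\ppath^2_{s,v}}\ge 1$.
\end{itemize}
Together these force $\pv_{\ppath^1_{s,t}}=0$ in every feasible point. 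So $\pv_{\ppath^1_{s,t}}>\pv_{\ppath^1_{s,v}}$ never occurs; subpath constraints with a one-arc subpath are already implied for one-shortest lengths. Only your ``vice versa'' option survives: you must violate $\pv_{\ppath^2_{s,v}}\ge\pv_{\ppath^2_{s,t}}$.

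A concrete such point is the following:
\begin{itemize}
\item $x=\iv=1$ on the three thin arcs and $x=\iv=\nicefrac{1}{2}$ on the eleven thick arcs;
\item $\pv_{\ppath^1_{s,v}}=1$ and $\pv_{\ppath^2_{s,t}}=\pv_{\ppath^3_{s,t}}=\nicefrac{1}{2}$;
\item all other $\pv=0$.
\end{itemize}
It is feasible for the original relaxation and has value $8.5$.

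\textbf{The lower bound.} Your plan to make $\pv_{\ppath^3_{s,t}}$ large and then charge the eight arcs of $\ppath^3_{s,t}$ cannot work. Those arcs contribute at most $8<8.5$. Moreover, the subpath inequality together with $x\le 1$ on the arcs of $\ppath^2_{s,v}$ only yields $u\coloneqq\pv_{\ppath^2_{s,t}}\le\nicefrac{1}{3}$, i.e., $\pv_{\ppath^3_{s,t}}\ge\nicefrac{2}{3}$.

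What is missing is a parametric charge of every arc, trading the load on $\ppath^3_{s,t}$ against the load on the thin arcs. Set $\varepsilon\coloneqq\pv_{\ppath^2_{s,v}}-u\ge 0$; this is where \eqref{eq:ilp_mspnd_subpath} enters. Using $\pv_{\ppath^1_{s,t}}=0$, capacity and connectivity give:
\begin{itemize}
\item $x\ge 1-u$ on each arc of $\ppath^3_{s,t}$;
\item $x\ge u$ on each of the three arcs from $v$ to $t$;
\item $x\ge 3u+\varepsilon$ on both arcs of $\ppath^2_{s,v}$;
\item $x_{sv}\ge\pv_{\ppath^1_{s,v}}\ge 1-u-\varepsilon$.
\end{itemize}
Hence
\[
\sum_{\aarc\in\arcs}x_\aarc\ \ge\ 8(1-u)+3u+2(3u+\varepsilon)+(1-u-\varepsilon)\ =\ 9+\varepsilon\ \ge\ 9>8.5 .
\]
The shortest-path constraints for $\ppath^1_{s,t}$ and $\ppath^2_{s,t}$ are not needed. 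Without this derivation, ``solve both LPs numerically and read off a certificate'' is a search strategy, not a proof.
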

\begin{proof}
    Consider the digraph $G=(V,\arcs)$ and its vertices $s,v,t\in V$ as shown in
    \Cref{fig:strengthening_inequality}.
    For $i \in \{1,2\}$, let $\ppath^i_{s,v}$ be the unique $s$-$v$-path
    containing exactly $i$~arcs and $\ppath^i_{s,t}$ the unique $s$-$t$-path
    containing $\ppath^i_{s,v}$; further, let be $\ppath^3_{s,t}$ the last
    remaining $s$-$t$-path.
    Lastly, let $\Athin \be \Union_{i \in \{1,2\}} \ppath^i_{s,v}$ be the three arcs on $s$-$v$-paths and $\Athick \be \arcs \setminus \Athin$.
    Consider the \prob{MSPND} instance on network~$(G,\ccap,\len,\paredges{})$ and traffic~$\demand$ with
    $\len(\aarc) = \paredges{\aarc} = 1, \forall\aarc \in \arcs$, and
    \begin{align*}
        \ccap(\aarc) = \ecap(\aarc) &= \begin{cases}
            1 & \text{if } \aarc \in \Athin,\\
            2 & \text{if } \aarc \in \Athick;
        \end{cases}&
        \demand(u,w) &= \begin{cases}
            2 & \text{if } (u,w) = (s,t),\\
            1 & \text{if } (u,w) = (s,v),\\
            0 & \text{otherwise.}
        \end{cases}
    \end{align*}
    The instance allows a feasible integer solution (of value $9$) by picking precisely the single arc on $\ppath^1_{s,v}$ and all arcs on $\ppath^3_{s,t}$.
    The following fractional solution $(\bar x,\bar \iv,\bar \pv)$ has only objective value $8.5$ and satisfies all constraints of the original LP relaxation:
    \begin{align*}
        \bar x_\aarc = \bar \iv_\aarc &= \begin{cases}
            1 & \text{if } \aarc \in \Athin,\\
            \nicefrac{1}{2} & \text{if } \aarc \in \Athick;
        \end{cases}&
        \bar \pv_\ppath &= \begin{cases}
            1 & \text{if } \ppath = \ppath^1_{s,v},\\
            \nicefrac{1}{2} & \text{if } \ppath \in \{\ppath^2_{s,t},\ppath^3_{s,t}\},\\
            0 & \text{otherwise.}
        \end{cases}
    \end{align*}
    Since
    $\bar \pv_{\ppath^2_{s,v}} < \bar \pv_{\ppath^2_{s,t}}$,
    adding constraint~\eqref{eq:ilp_mspnd_subpath} would
    cut off this solution. %

    It remains to argue that the LP including~\eqref{eq:ilp_mspnd_subpath} enforces an objective value strictly larger than $8.5$. Since $\paredges{}=1$, we have $x_\aarc=\iv_\aarc$ for each $\aarc\in A$ by \eqref{eq:ilp_mspnd_conn_down} and \eqref{eq:ilp_mspnd_conn_up}.
    A key ingredient is that $\pv_\ppath$, $0\leq \pv_\ppath\leq1$, represents a traffic of volume $\pv_\ppath$ if $\ppath$ is an $s$-$v$-path (since $\demand(s,v)=1$), but a traffic of volume $2\pv_\ppath$ if $\ppath$ is an $s$-$t$-path (since $\demand(s,t)=2$).
    Let $\aarc_i\in \ppath^i_{s,v}$, $i\in\{1,2\}$.
    By the capacity constraint \eqref{eq:ilp_mspnd_cap} for $\aarc_i$ we have
    \begin{equation}
        x_{\aarc_i}\geq \pv_{\ppath^i_{s,v}}+2\pv_{\ppath^i_{s,t}}.\label{eq:X}
    \end{equation}
    By the shortest-path constraint~\eqref{eq:ilp_mspnd_shortest} for $\ppath^1_{s,v}$ we get $\pv_{\ppath^2_{s,v}}\leq 1-x_{\aarc_1}$ and together with~\eqref{eq:X} for $\aarc_1$ we have
$\pv_{\ppath^2_{s,v}}\leq 1-\pv_{\ppath^1_{s,v}}-2\pv_{\ppath^1_{s,t}}$.
    Since $\pv_{\ppath^1_{s,v}}+\pv_{\ppath^2_{s,v}}\geq 1$ this simplifies to $0\leq -2\pv_{\ppath^1_{s,t}}$ and consequently $\pv_{\ppath^1_{s,t}}=0$.
    By the strengthening~\eqref{eq:ilp_mspnd_subpath}, we have $\pv_{\ppath^2_{s,t}}\leq \pv_{\ppath^2_{s,v}}$.
    Let $\pv_{\ppath^2_{s,v}}-\pv_{\ppath^2_{s,t}}=\varepsilon\geq 0$.
    For any $\aarc_2\in\ppath^2_{s,v}$ we can use \eqref{eq:X} to thus deduce
    $x_{\aarc_2}\geq 3\pv_{\ppath^2_{s,t}} +\varepsilon$. For $\aarc_1\in\ppath^1_{s,v}$ we know
    $x_{\aarc_1}=\pv_{\ppath^1_{s,v}}\geq1-\pv_{\ppath^2_{s,v}} = 1-\pv_{\ppath^2_{s,t}}-\varepsilon$.
    Observe that for any arc $\aarc\in A$, we have $x_\aarc=\iv_\aarc$ at least large enough to satisfy constraint~\eqref{eq:ilp_mspnd_cap}, where exactly the arcs~$\Athick$
    have capacity $2$. So we can deduce the objective function value as
    \begin{align*}
        \sum_{\aarc\in A} x_\aarc &=
        \sum_{\aarc\in P^3_{s,t}} x_\aarc
         + \sum_{\aarc\in P^2_{s,t}\setminus P^2_{s,v}} x_\aarc
         + \sum_{\aarc\in P^2_{s,v}} x_\aarc
         + \sum_{\aarc\in P^1_{s,v}} x_\aarc\\
        & \geq |P^3_{s,t}|\cdot(\demand(s,t)\cdot \pv_{\ppath^3_{s,t}}/2)
        \ +\  |P^2_{s,t}\setminus P^2_{s,v}|\cdot(\demand(s,t)\cdot \pv_{\ppath^2_{s,t}}/2)\\
        &\hspace{1cm}+\ |P^2_{s,v}|\cdot (3\pv_{\ppath^2_{s,t}} +\varepsilon)
        \ +\  |P^1_{s,v}|\cdot (1-\pv_{\ppath^2_{s,t}}-\varepsilon)\\
        & = 8 (1-\pv_{\ppath^2_{s,t}}) + 3 \pv_{\ppath^2_{s,t}} + 2 (3\pv_{\ppath^2_{s,t}} +\varepsilon) + 1(1-\pv_{\ppath^2_{s,t}}-\varepsilon)
        = 9 +\varepsilon \geq 9.\qedhere
    \end{align*}

\end{proof}

We do not need to add constraints for~$|\subpath{\ppath}{s,v}| = 1$
and~$|\subpath{\ppath}{v,t}| = 1$ whenever
the arc length function~$\len$ is \emph{one-shortest}, i.e., if two vertices $s,t\in V$ are connected by a direct arc $st\in A$, there is no shorter path than $st$.

\begin{observation}
    For one-shortest arc lengths, the
    constraints~\eqref{eq:ilp_mspnd_subpath} for $|\subpath{\ppath}{s,v}| = 1$
    and $|\subpath{\ppath}{v,t}| = 1$
    are already implied by the original LP relaxation.
\end{observation}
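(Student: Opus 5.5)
We need to show that for one-shortest lengths, $\pv_{\subpath{\ppath}{s,v}} \geq \pv_\ppath$ is implied whenever $\subpath{\ppath}{s,v}$ is a single arc $\aarc=sv$ (the case $\subpath{\ppath}{v,t}=vt$ is symmetric). The plan is to chain three constraints of the original relaxation: the arc-buying constraint~\eqref{eq:ilp_mspnd_edgebuying}, the shortest-path constraint~\eqref{eq:ilp_mspnd_shortest} for the single-arc path $\aarc$, and the connectivity constraint~\eqref{eq:ilp_mspnd_connectivity} for the pair $(s,v)$.

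\emph{Step 1 (arc-buying).} Since $\aarc\in\ppath$ and $\ppath\in\stpaths$, constraint~\eqref{eq:ilp_mspnd_edgebuying} for $(s,t)$ and $\aarc$ gives $\pv_\ppath \leq \sum_{\ppath''\in\stpaths\colon \aarc\in\ppath''}\pv_{\ppath''}\leq \iv_\aarc$, using nonnegativity of the $\pv$-variables.

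\emph{Step 2 (shortest-path constraint for the one-arc path).} Apply~\eqref{eq:ilp_mspnd_shortest} to the pair $(s,v)$ and the path consisting of the single arc $\aarc$:
\[
1-\iv_\aarc \;\geq\; \sum_{\ppath'\in\pathset_{s,v}\colon \aarc\pless\ppath'}\pv_{\ppath'} .
\]
By one-shortestness, no $s$-$v$-path is strictly shorter than $\aarc$. Under a tie-breaking in which the direct arc precedes every other $s$-$v$-path of equal length, every $s$-$v$-path other than $\aarc$ lies above $\aarc$ in $\pless$. Hence the right-hand side equals $\sum_{\ppath'\in\pathset_{s,v}}\pv_{\ppath'}-\pv_\aarc$.

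\emph{Step 3 (connectivity).} Constraint~\eqref{eq:ilp_mspnd_connectivity} gives $\sum_{\ppath'\in\pathset_{s,v}}\pv_{\ppath'}\geq1$, so the right-hand side in Step~2 is at least $1-\pv_\aarc$. Combining, $1-\iv_\aarc\geq 1-\pv_\aarc$, i.e.\ $\pv_\aarc\geq\iv_\aarc\geq\pv_\ppath$, which is exactly the constraint $\pv_{\subpath{\ppath}{s,v}}\geq\pv_\ppath$. For $\subpath{\ppath}{v,t}=vt$, the same argument applies using the pair $(v,t)$.

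\emph{Main obstacle: applicability of Steps 2 and 3.} Both steps require the subpath's pair to be a terminal pair, since constraints~\eqref{eq:ilp_mspnd_connectivity} and~\eqref{eq:ilp_mspnd_shortest} only exist for pairs in $\termpairs$. Indeed, constraint~\eqref{eq:ilp_mspnd_subpath} only refers to variables that exist, so I would interpret it as ranging over subpaths whose endpoints form a terminal pair, and state this assumption explicitly. The second requirement is that ties in length are broken in favour of the direct arc. One-shortestness only rules out strictly shorter paths, so I would either stipulate this tie-breaking, which is compatible with the arbitrary tie-breaking allowed in \Cref{sec:mspnd}, or read one-shortest as saying that $\aarc$ is the $\pless$-minimum $s$-$v$-path.
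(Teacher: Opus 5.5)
Your proof is correct and is essentially the paper's argument: the same chain \eqref{eq:ilp_mspnd_edgebuying}, then \eqref{eq:ilp_mspnd_shortest} for the single-arc $s$-$v$-path, then \eqref{eq:ilp_mspnd_connectivity} for $(s,v)$, with one-shortestness ensuring that no $s$-$v$-path precedes the direct arc in $\pless$. Your two caveats make explicit what the paper's proof uses only implicitly: that $(s,v)$ must be a terminal pair, and that equal-length ties must be broken in favour of the direct arc.
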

\begin{proof}
    We only argue the case $|\subpath{\ppath}{s,v}| = 1$; $|\subpath{\ppath}{v,t}| = 1$ is analogous.
    Let $\aarc=sv$ be the first edge on some path $\ppath\in\stpaths$ with $|\ppath|>1$. Clearly, $\aarc$ itself is a path $\ppath'\in\pathset_{s,v}$, and so
\[\pv_\ppath
\overset{\eqref{eq:ilp_mspnd_edgebuying}}{\leq}
    \iv_\aarc
\overset{\eqref{eq:ilp_mspnd_shortest}}{\leq}
    1 - \sum_{{\bar{\ppath} \in \pathset_{s,v} \colon \ppath' \pless \bar{\ppath}}} \pv_{\bar{\ppath}}
\overset{\eqref{eq:ilp_mspnd_connectivity}}\leq
    1-\big(1-\pv_{\ppath'}-\sum_{{\hat{\ppath} \in \stpaths \colon \hat{\ppath} \pless \ppath'}} \pv_{\hat{\ppath}}\big)
    \leq \pv_{\ppath'},
\]
where the last sum evaluates to $0$ since there are no path $\hat{\ppath}\pless P'$ as $\len$ is one-shortest.
\end{proof}

\section{Experiments}\label{sec:experiments}

We consider our column generation scheme \algspnd{} for exact solutions of \prob{MSPND} (\Cref{sec:mspnd}) together with the traffic-oblivious algorithms \algcps{}, \algmcf{}, and \algmcfpp{} presented in \Cref{sec:TO-algs}. Furthermore, we consider  the trivial algorithm \algspndone{} that computes
an optimal solution for \prob{F-}\prob{MSPND}.
All algorithms are implemented in
\texttt{C++}, using functionality of the Open Graph Drawing
Framework (OGDF~\cite{DBLP:reference/crc/ChimaniGJKKM13}, \url{www.ogdf.net}) release \enquote{2025.10 Foxglove}.
The implementations will be made public as part of the next OGDF release but
are already available---alongside the complete dataset and detailed results discussed herein---at \url{tcs.uos.de/research/spnd}.
The code is compiled with GCC~12.2.0 using optimization level \texttt{-O3}.
Each computation is performed on a single physical processor of a
Xeon~Gold~6134~CPU (3.2~GHz) running Debian GNU/Linux 12 \enquote{Bookworm}
with Linux kernel version 6.1.0-18-amd64.
LP computations are performed using Gurobi~11.0.1; \algspnd{} uses SCIP~9.0.0 as a branch-and-price framework.
We enforce a time limit of 10~minutes for each run.

\subsection{Instances}
We use the three instance sets of the REPETITA framework for
\enquote{repeatable experiments on Traffic Engineering
algorithms}~\cite{DBLP:journals/corr/abs-1710-08665}, which contain network
topologies including arc capacities and lengths, as well as traffic matrices that are designed for
a \ac{MLU} of 90\% when routed via MCF in the original network topology.

\begin{itemize}
    \item \textbf{DEFO}~\cite{DBLP:conf/sigcomm/HartertVSBFTF15}: Nine network
        topologies
        with
        50--315 vertices and 276--1944 arcs, synthetically generated for
        benchmarking traffic engineering algorithms, and one
        traffic matrix each.
    \item \textbf{Rocketfuel}~\cite{DBLP:journals/ton/SpringMWA04}: Six network
        topologies with 79--315 vertices and 294--1944 arcs, collected using
        Internet mapping techniques and each one accompanied by five traffic matrices.
    \item \textbf{Topology Zoo (TZ)}~\cite{DBLP:journals/jsac/KnightNFBR11}: 260
        real-world topologies with 4--197 vertices and 8--490 arcs as reported by
        official company websites, each one accompanied by five different
        traffic matrices.
        Since the original data set does not contain arc lengths, we consider it once with unit arc lengths and once with arc lengths that are inversely
        proportional to the arc capacities, as these are the standard link metrics used in practice~\cite{DBLP:journals/corr/abs-1710-08665}.
\end{itemize}

For each combination of network topology and traffic matrix, we consider the
network topology once in the simplex and once in the full-duplex communication setting. %
We preprocess each instance by removing all parallel arcs,
enforcing the same arc length in both directions of each link for the full-duplex setting, and scaling down the traffic matrix such that SPR on the
instance produces an \ac{MLU} of exactly 1.

We consider two further parameters for each network topology:
the number of connections per link~$\paredges{}\in\{1,5\}$, and
the retention ratio~$\mcfalpha \in \{0.3,0.5,0.7\}$.
Given an SPR-feasible traffic matrix $\demand$ for the full network, we consider the \prob{MSPND} instance with scaled matrix $\mcfalpha \demand$ to simulate phases of low and medium traffic.
Recall that \algcps{}, \algmcf{}, and \algmcfpp{} are oblivious to the traffic matrix (and the arc lengths) but require~$\mcfalpha$ as an input parameter.

\subsection{Results}
We examine the implemented algorithms w.r.t.\ success rate, running time, number of active connections in the solution, and maximum link utilization.
Our main observations hold across all instance sets; also, the choice of arc lengths for the TZ instances has surprisingly no significant impact on the results. In the following, we often focus on the largest set~TZ.

\begin{figure}
    \centering
    \includegraphics[width=0.49\textwidth]{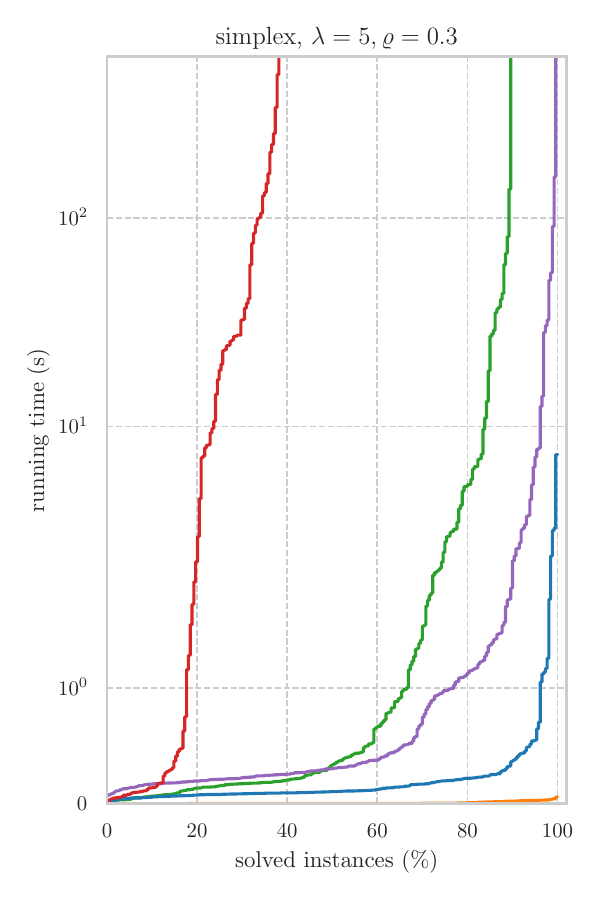}%
    \includegraphics[width=0.49\textwidth]{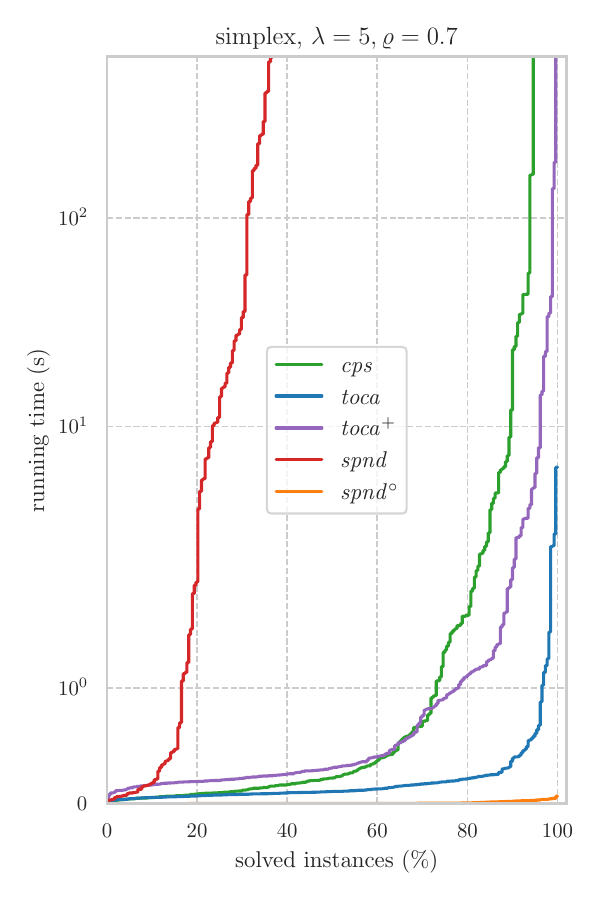}\\
    \includegraphics[width=0.49\textwidth]{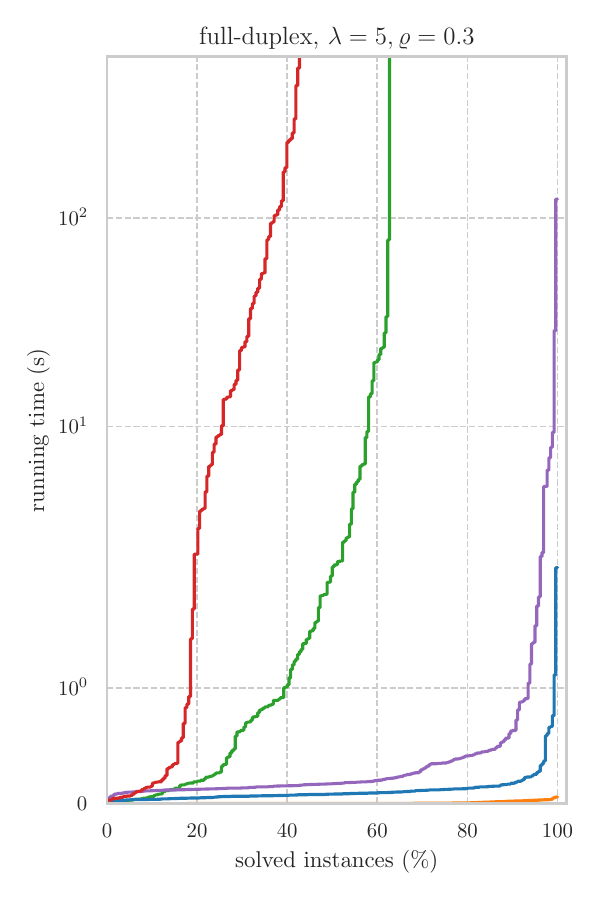}%
    \includegraphics[width=0.49\textwidth]{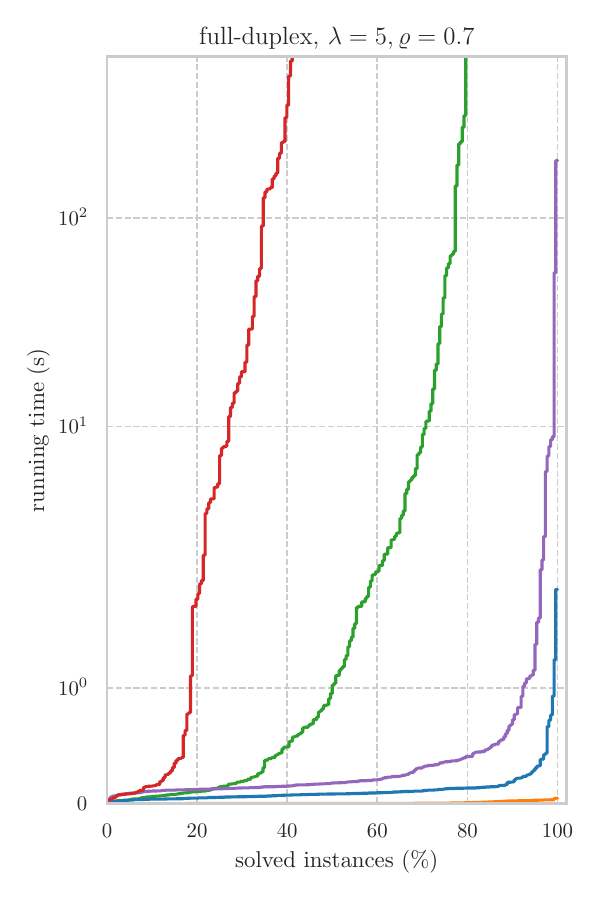}
    \caption{Performance plots for representative parameter settings.}
    \label{fig:performance}
\end{figure}

\begin{figure}
    \centering
    \includegraphics[width=0.49\textwidth]{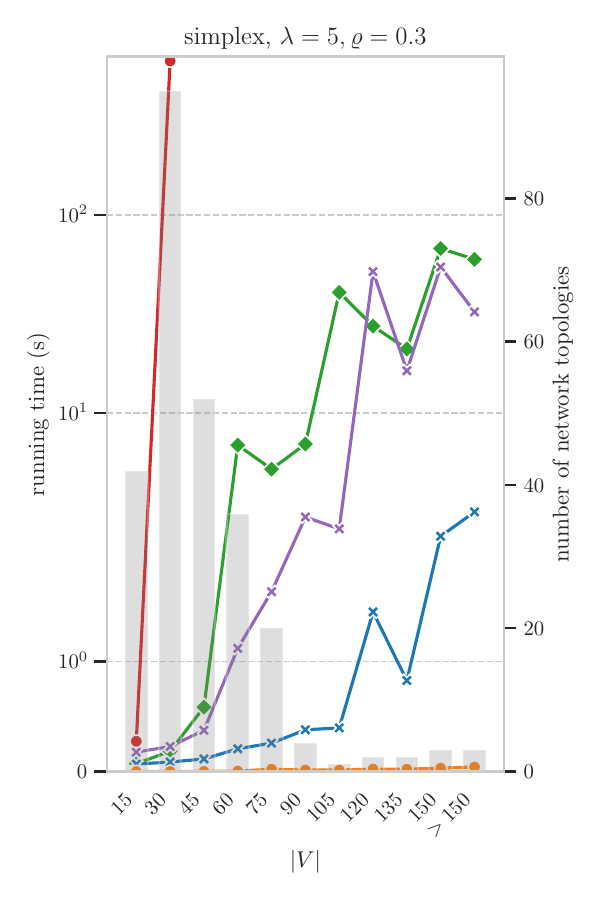}%
    \includegraphics[width=0.49\textwidth]{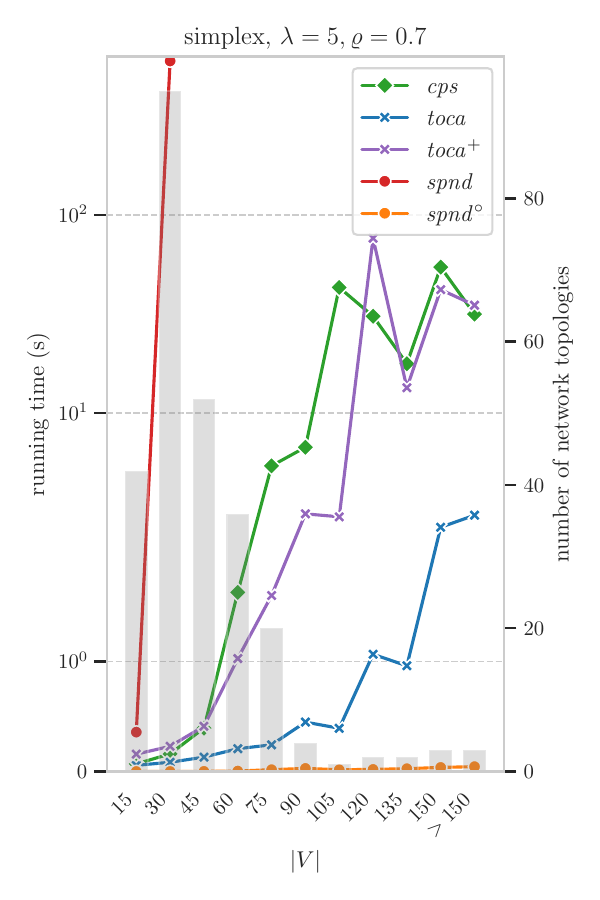}\\
    \includegraphics[width=0.49\textwidth]{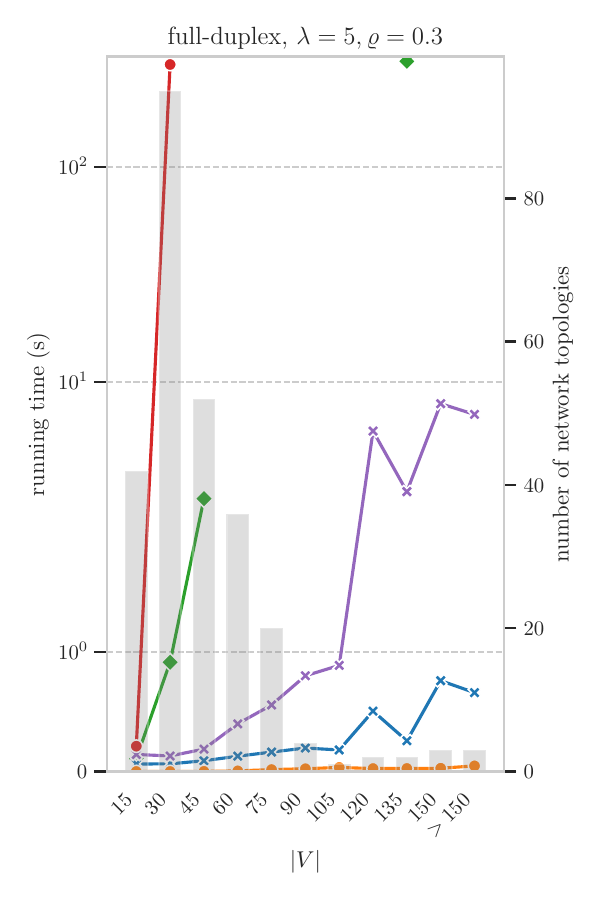}%
    \includegraphics[width=0.49\textwidth]{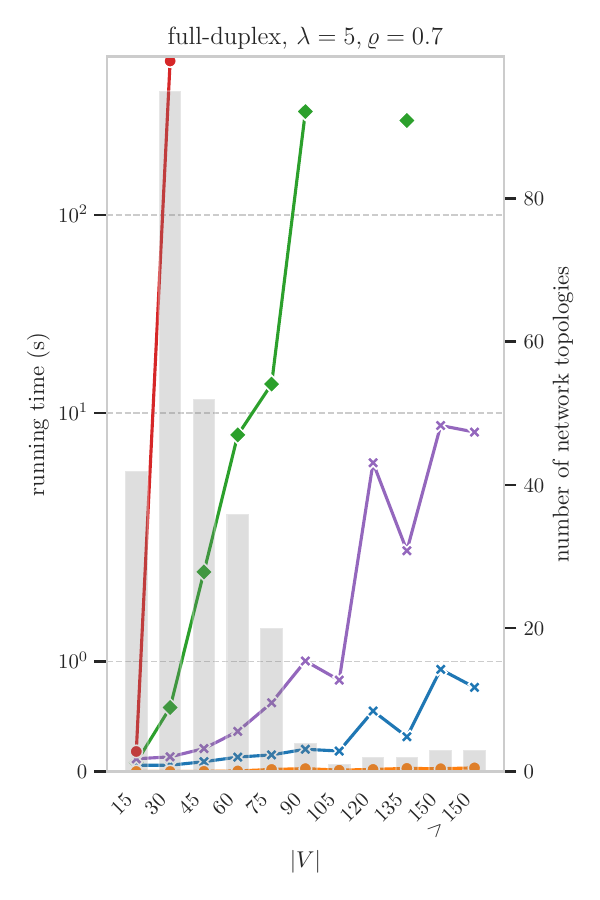}
    \caption{Median running times for representative parameter settings. Timeouts are not displayed.}
    \label{fig:running_time_plots}
\end{figure}

\begin{figure}
    \centering
    \definecolor{colTO}   {RGB}{225, 105, 105}
\definecolor{col1dig} {RGB}{ 95, 195, 125}
\definecolor{col2dig} {RGB}{235, 205,  90}
\definecolor{col3dig} {RGB}{235, 155,  90}
\definecolor{gridCol} {RGB}{180,180,180}

\newcommand{\pickcolor}[1]{%
	\def\theCC{colTO}%
	\if T#1\else
	\pgfmathparse{int(#1)}%
	\ifnum\pgfmathresult<10  \def\theCC{col1dig}\else
	\ifnum\pgfmathresult<100 \def\theCC{col2dig}\else
	\def\theCC{col3dig}%
	\fi\fi
	\fi
}

\newcommand{\dcircle}[4]{%
	\pickcolor{#3}%
	\filldraw[fill=\theCC, draw=black, line width=0.6pt]
	(#1,#2) circle (0.34cm);
	\node[font=\fontsize{6}{7}\selectfont, text=black]
	at (#1,#2) {#4};
}

\def\ySepA{3.75}
\def\yI{1.0}  \def\yII{2.0}  \def\yIII{3.0}
\def\yIV{4.5} \def\yV{5.5}   \def\yVI{6.5}

\def\xAlpha{6}
\def\xRshift{6}
\def\xEpL{0.25}
\def\xEpR{11.75}

\begin{tikzpicture}[x=1.15cm, y=1.0cm]
    \node[font=\small\bfseries] at (3.0, 0.45) {simplex};
    \node[font=\small\bfseries] at (9.0, 0.45) {full-duplex};

    \foreach \lbl/\cx in {
        \algspnd/1,
        \algspndone/2,
        \algcps/3,
        \algmcf/4,
        \algmcfpp/5}{
        \node[font=\small] at (\cx,          -0.2) {\lbl};
        \node[font=\small] at (\cx+\xRshift, -0.2) {\lbl};
    }

    \foreach \cx in {1,2,3,4,5}{
        \draw[gridCol, line width=0.3pt]
        (\cx,          -\yI+0.4) -- (\cx,          -\yVI-0.4);
        \draw[gridCol, line width=0.3pt]
        (\cx+\xRshift, -\yI+0.4) -- (\cx+\xRshift, -\yVI-0.4);
    }

    \foreach \ry in {\yI,\yII,\yIII,\yIV,\yV,\yVI}{
        \draw[gridCol, line width=0.3pt]
        (0.5,-\ry) -- (5.5,-\ry);
        \draw[gridCol, line width=0.3pt]
        (0.5+\xRshift,-\ry) -- (5.5+\xRshift,-\ry);
    }

    \node[font=\small, rotate=90] at (\xEpL, -{(\yI+\yIII)/2})  {$\paredges{}=1$};
    \node[font=\small, rotate=90] at (\xEpL, -{(\yIV+\yVI)/2})  {$\paredges{}=5$};
    \node[font=\small, rotate=90] at (\xEpR, -{(\yI+\yIII)/2})  {$\paredges{}=1$};
    \node[font=\small, rotate=90] at (\xEpR, -{(\yIV+\yVI)/2})  {$\paredges{}=5$};

    \foreach \a/\ry in {
        0.3/\yI, 0.5/\yII, 0.7/\yIII,
        0.3/\yIV, 0.5/\yV, 0.7/\yVI}{
        \node[font=\footnotesize] at (\xAlpha,-\ry) {$\mcfalpha{=}\a$};
    }

    \draw[black!40, line width=0.4pt, dashed]
    (0, -\ySepA) -- (12, -\ySepA);

    \dcircle{1}{-\yI}{T}{TO}   \dcircle{2}{-\yI}{1}{1}   \dcircle{3}{-\yI}{17}{17}   \dcircle{4}{-\yI}{4}{4}   \dcircle{5}{-\yI}{33}{33}
    \dcircle{1}{-\yII}{T}{TO}  \dcircle{2}{-\yII}{1}{1}  \dcircle{3}{-\yII}{58}{58}  \dcircle{4}{-\yII}{3}{3}  \dcircle{5}{-\yII}{41}{41}
    \dcircle{1}{-\yIII}{T}{TO} \dcircle{2}{-\yIII}{1}{1} \dcircle{3}{-\yIII}{4}{4}   \dcircle{4}{-\yIII}{4}{4} \dcircle{5}{-\yIII}{116}{116}

    \dcircle{1}{-\yIV}{T}{TO}  \dcircle{2}{-\yIV}{1}{1}  \dcircle{3}{-\yIV}{63}{63}  \dcircle{4}{-\yIV}{4}{4}  \dcircle{5}{-\yIV}{44}{44}
    \dcircle{1}{-\yV}{T}{TO}   \dcircle{2}{-\yV}{1}{1}   \dcircle{3}{-\yV}{54}{54}   \dcircle{4}{-\yV}{4}{4}   \dcircle{5}{-\yV}{33}{33}
    \dcircle{1}{-\yVI}{T}{TO}  \dcircle{2}{-\yVI}{1}{1}  \dcircle{3}{-\yVI}{31}{31}  \dcircle{4}{-\yVI}{4}{4}  \dcircle{5}{-\yVI}{39}{39}

    \dcircle{1+\xRshift}{-\yI}{T}{TO}   \dcircle{2+\xRshift}{-\yI}{1}{1}   \dcircle{3+\xRshift}{-\yI}{T}{TO}    \dcircle{4+\xRshift}{-\yI}{2}{2}   \dcircle{5+\xRshift}{-\yI}{9}{9}
    \dcircle{1+\xRshift}{-\yII}{T}{TO}  \dcircle{2+\xRshift}{-\yII}{1}{1}  \dcircle{3+\xRshift}{-\yII}{179}{179} \dcircle{4+\xRshift}{-\yII}{2}{2}  \dcircle{5+\xRshift}{-\yII}{117}{117}
    \dcircle{1+\xRshift}{-\yIII}{T}{TO} \dcircle{2+\xRshift}{-\yIII}{1}{1} \dcircle{3+\xRshift}{-\yIII}{4}{4}   \dcircle{4+\xRshift}{-\yIII}{2}{2} \dcircle{5+\xRshift}{-\yIII}{118}{118}

    \dcircle{1+\xRshift}{-\yIV}{T}{TO}  \dcircle{2+\xRshift}{-\yIV}{1}{1}  \dcircle{3+\xRshift}{-\yIV}{T}{TO}   \dcircle{4+\xRshift}{-\yIV}{2}{2}  \dcircle{5+\xRshift}{-\yIV}{118}{118}
    \dcircle{1+\xRshift}{-\yV}{T}{TO}   \dcircle{2+\xRshift}{-\yV}{1}{1}   \dcircle{3+\xRshift}{-\yV}{T}{TO}    \dcircle{4+\xRshift}{-\yV}{2}{2}   \dcircle{5+\xRshift}{-\yV}{42}{42}
    \dcircle{1+\xRshift}{-\yVI}{T}{TO}  \dcircle{2+\xRshift}{-\yVI}{1}{1}  \dcircle{3+\xRshift}{-\yVI}{T}{TO}   \dcircle{4+\xRshift}{-\yVI}{2}{2}  \dcircle{5+\xRshift}{-\yVI}{9}{9}

\end{tikzpicture}\vspace{-10mm}
\caption{Median running times (in sec) over all \enquote{large} networks ($|V|\!>\!150$); \enquote{TO} denotes~timeout.}
\label{fig:running_times}
\end{figure}

\subparagraph{Success Rate and Running Time.} (cf.\ \Cref{fig:performance,fig:running_time_plots,fig:running_times})
A clear hierarchy of the algorithms %
emerges:
Due to its simplicity, %
\algspndone{} is almost always the fastest
algorithm, solving every instance in every parameter setting within one second.
The algorithm is closely followed by \algmcf{}, which only has to solve a single LP and thus always terminates on TZ instances in less than five (ten) seconds for the full-duplex (simplex, resp.) setting.
Even the largest DEFO and Rocketfuel topologies were usually solvable by \algmcf{} in the full-duplex setting whereas it reached the time limit otherwise. This is because \algmcf{}
was designed to leverage structural properties of full-duplex networks to speed up computations.
The costly post-processing provided by \algmcfpp{} operates more quickly in the simplex setting due to its greedy-like nature but still raises the running time significantly to several hundred seconds on large instances.

While the setting of $\paredges{}$ has nearly no effect on the running time of \algspndone{},
\algcps{} is highly sensitive to these parameters:
As it is designed for general digraphs, it has trouble with the full-duplex setting, where it already reaches the time limit for TZ instances with 50--100 vertices.
Unsurprisingly, \algcps{} is faster when considering only $\paredges{} = 1$ connection per link.
It is also faster for larger~$\mcfalpha$ (and indeed very quick for $\paredges{} = 1$, $\mcfalpha = 0.7$ in both communication settings), but degrades rapidly for smaller $\mcfalpha$. We speculate that this is due to simpler capacity interactions between different terminal pairs when more connections are active.

The most complicated algorithm \algspnd{} performs worst not only w.r.t.\ running time but indeed already w.r.t.\ the success rate, reaching the time limit of 10 minutes for 57\% of all TZ runs. Pilot studies showed that a larger time limit does not significantly improve the situation.
Generally, however, \algspnd{} can find provably optimal solutions for instances with up to roughly 80 vertices in both communication settings. %
Due to its size, the ILP is inherently difficult to solve even when using our column generation scheme. While the pricing routine is computationally cheap (only running for half a minute within the 10 minute time frame), a large amount of paths must be added to the ILP until a solution without violated dual path constraints is found.
Setting $\paredges{} = 5$ surprisingly lowers the running time,
with several TZ topologies containing 50--90 vertices only being solved for $\paredges{} = 5$ but not $\paredges{} = 1$. Presumably, a higher number of connections allows \algspnd{} to find good solutions that do not deactivate links completely, thus limiting the search of different possible routing paths.

\begin{figure}
    \centering
    \includegraphics[width=0.49\textwidth]{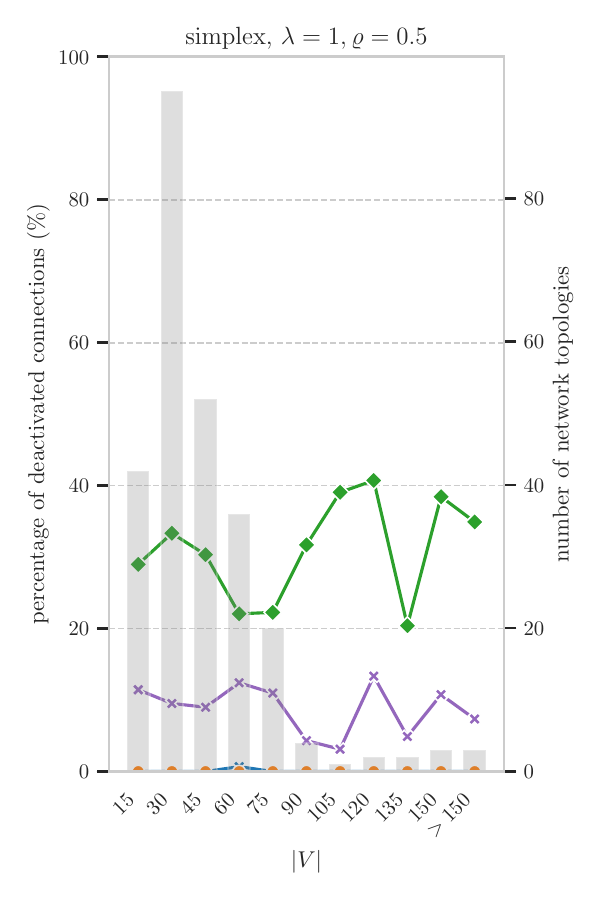}%
    \includegraphics[width=0.49\textwidth]{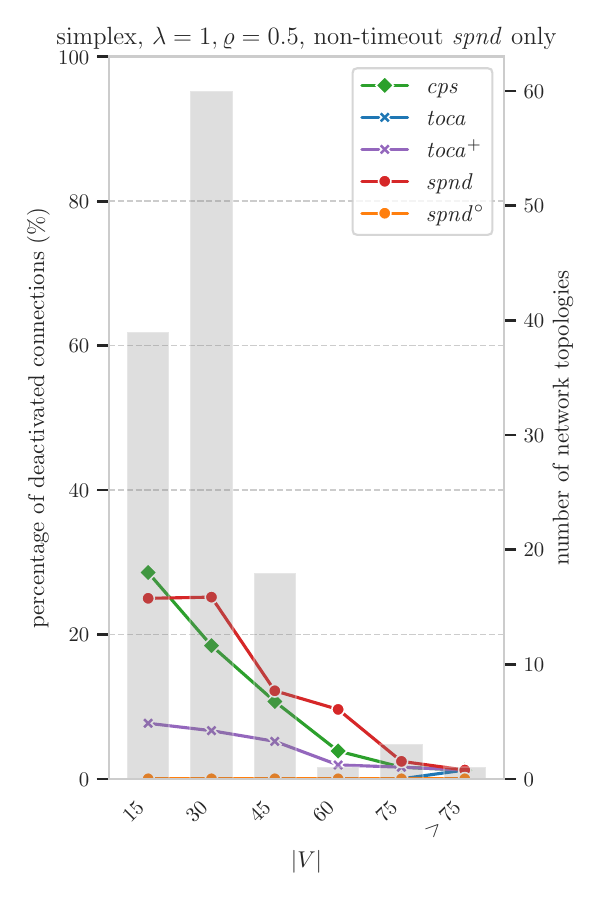}\\
    \includegraphics[width=0.49\textwidth]{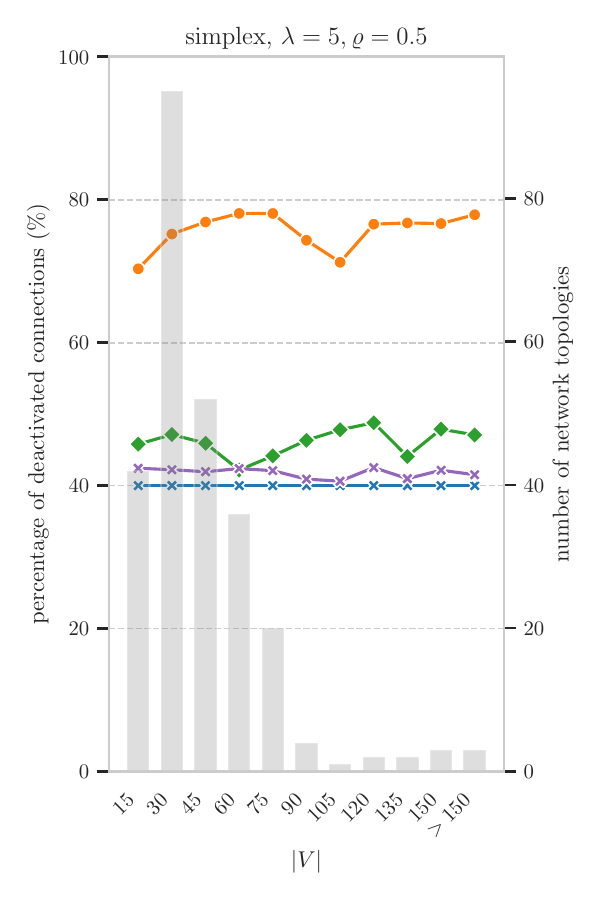}%
    \includegraphics[width=0.49\textwidth]{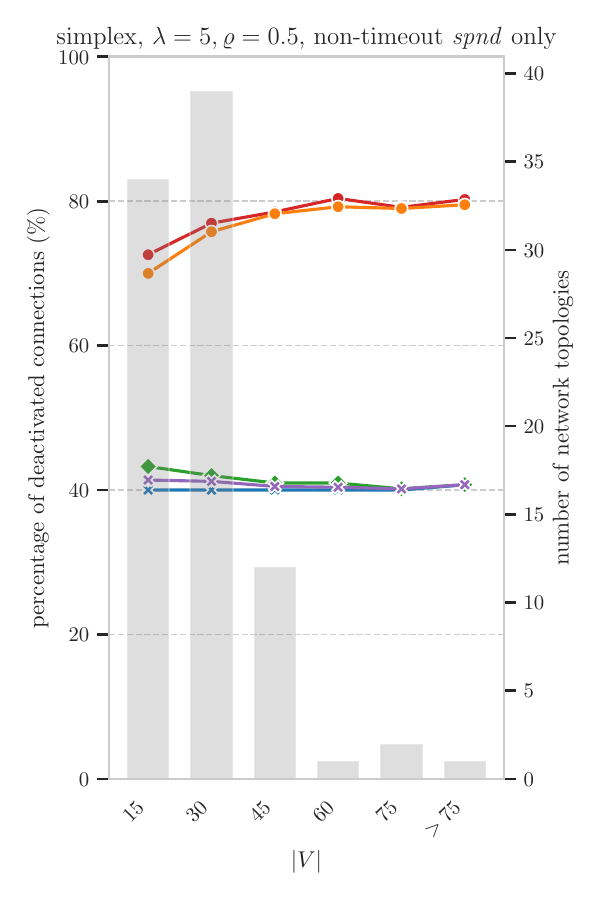}
    \caption{Median percentage of deactivated connections for representative parameter settings. On the left, \algspnd{} is not shown; on the right, only instances where \algspnd{} did not timeout are considered.}
    \label{fig:deactivated_connections}
\end{figure}

\begin{figure}
    \centering
    \includegraphics[width=0.49\textwidth]{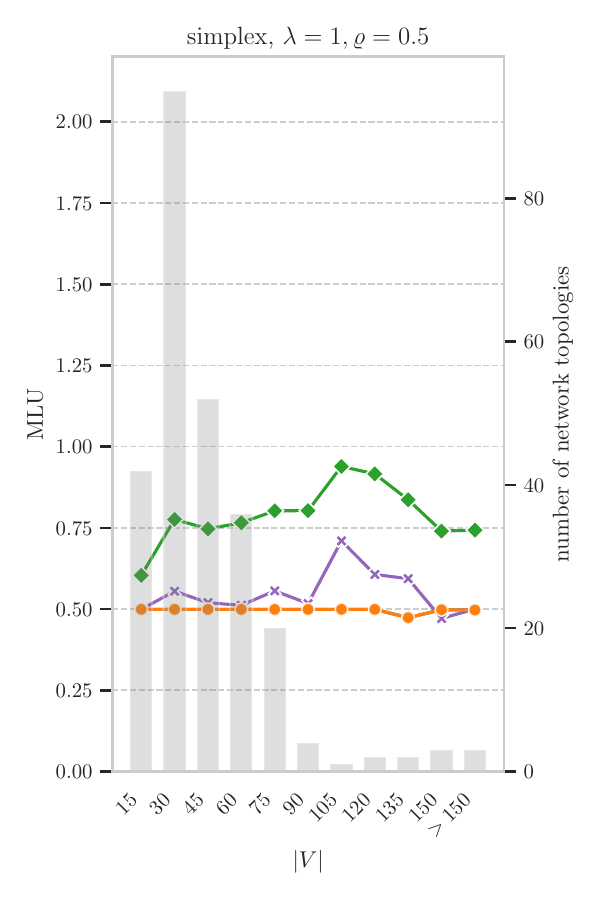}%
    \includegraphics[width=0.49\textwidth]{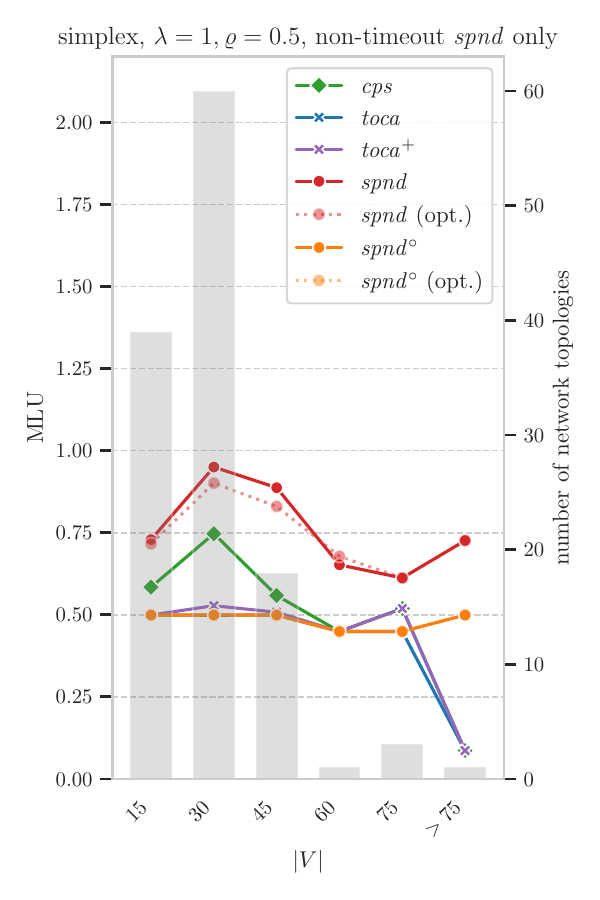}\\
    \includegraphics[width=0.49\textwidth]{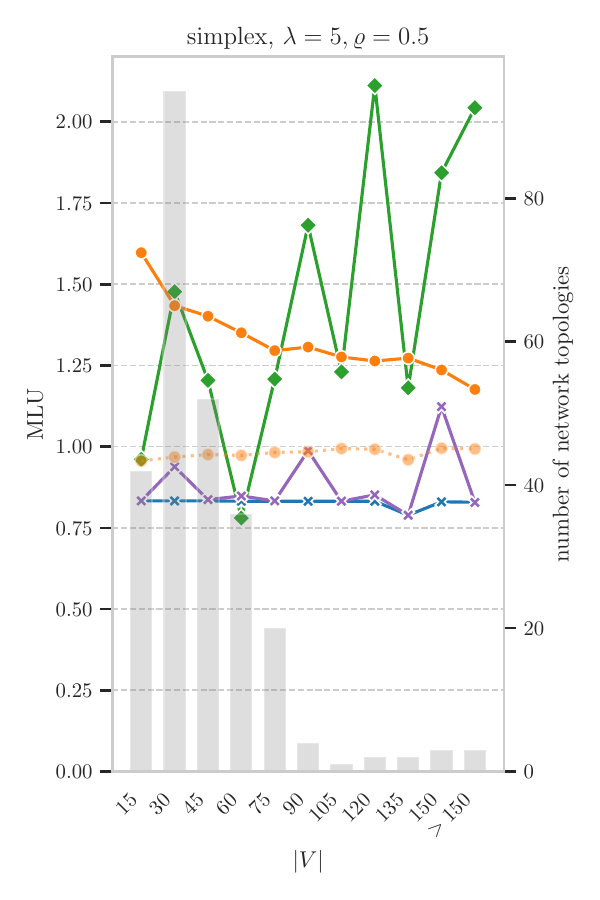}%
    \includegraphics[width=0.49\textwidth]{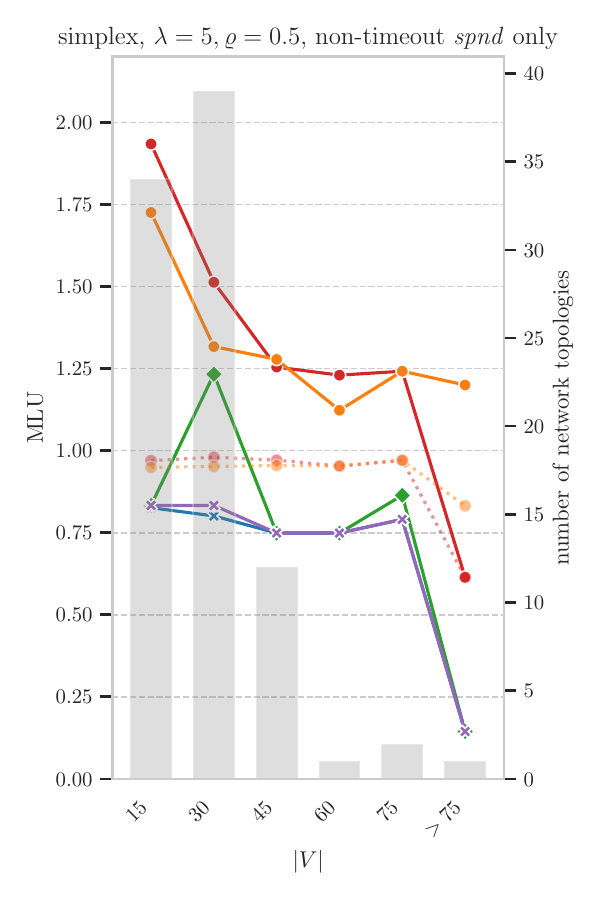}
    \caption{Median MLU for representative parameter settings.
     On the left, \algspnd{} is not shown; on the right, only instances where \algspnd{} did not timeout are considered.
    Dotted lines, labeled with \enquote{(opt.)}, show the MLU only w.r.t.\ to the traffic matrix that \algspnd{} or \algspndone{} was optimized for.}
    \label{fig:median_mlu}
\end{figure}

\subparagraph{Number of Active Connections (Solution Value).} (cf.\ \Cref{fig:deactivated_connections})
The great weakness of \algspndone{} becomes apparent for $\paredges{} = 1$,  where it is only able to deactivate those connections
that do not belong to links on any shortest path.
In practice, that means that there are only few, large instances where it can deactivate any connection at all.
In contrast, \algspnd{} shows the maximum number of connections that can be deactivated while still guaranteeing SPR-routability.
This value has a very high variance across instances,
depending greatly on the specific topology (recall that we only know this value for the comparably small instances due to \algspnd{}'s inapplicability to larger instances).
Clearly, we require more connections the larger the value of~$\mcfalpha$ is. The same holds when switching from simplex to full-duplex communication, as the deactivation of a connection requires that it is not needed in either direction.
In the simplex (full-duplex) setting, a maximum of 64\% (55\%) of connections can be deactivated for $\mcfalpha = 0.3$, and  53\% (41\%, resp.) for $\mcfalpha = 0.7$.

The traffic-oblivious algorithms \algcps{}, \algmcf{}, and \algmcfpp{} often surpass the upper bound given by \algspnd{}, in particular for larger instances and at lower $\mcfalpha$-values:
since they optimize for individual max-flows or MCF-routing respectively, they often return subnetworks that do not allow for a shortest-path routing anymore (see the \ac{MLU} comparison below).

The results shift greatly for $\paredges{} = 5$, which is the more realistic scenario since backbone networks contain multiple connections per link\cite{DBLP:journals/ojcs/OttenICA26}: Here, \algspndone{} suggests some of the highest
energy savings, deactivating only 1.1 percentage points fewer connections than \algspnd{}, which---depending on the communication setting and $\mcfalpha$---deactivates a median of 67--80\% of connections.
The traffic-oblivious algorithms deactivate much fewer connections since they aim to allow for a feasible routing even if the traffic changes.
Their reduction rates show very little variance in deactivated connections across network topologies, with \algmcf{} deactivating 60\% (40\%, 20\%) for $\mcfalpha=0.3$ ($0.5$, $0.7$, resp.), and \algmcfpp{} (\algcps{}) deactivating only a median of 2~(4, resp.) percentage points more connections.

\subparagraph{Maximum Link Utilization.} (cf.\ \Cref{fig:median_mlu})
It is only beneficial to deactivate connections if this does not inhibit the routability of the traffic.
Hence, we analyze the \acf{MLU}, i.e., the maximum ratio between the total traffic over an arc and its capacity in the computed subnetwork w.r.t.\ shortest-path routing. An MLU below 1.0 guarantees SPR-routability. Values above 1.0 suggest that one needs to consider a lower~$\mcfalpha$ compared to the actually measured traffic decrease (which is typically done for safety margins in practice anyhow~\cite{DBLP:journals/ton/SchullerACH21}).
Of course, \algspnd{} and \algspndone{} are optimized w.r.t.\ one specific traffic matrix~$\mcfalpha\demand$, allowing them to always obtain an MLU below 1.0 on~$\mcfalpha\demand$.
We show the median of these optimized MLUs as dotted lines.
To allow a fair comparison between the traffic-aware and -oblivious approaches, given a $\mcfalpha T$-specific solution by \algspnd{} and \algspndone{}, we also evaluate the median MLU across all five ($\mcfalpha$-scaled) traffic matrices for each instance.
Be aware that in rare occurrences, a traffic-aware algorithm may disconnect the network due to missing traffic between the connected components, leading to an infinite MLU on other traffic matrices.%

Naturally, the median MLU of the algorithms is strongly correlated to the number of connections they deactivate.
With increasing $\mcfalpha$, the median MLU also increases but showcases a lower variance across algorithms.
For %
$\paredges{} = 1$, the median MLU is also always below 1.0 for all algorithms, but the per-instance MLU for \algspnd{} and \algcps{} sometimes already surpass this threshold, suggesting that these algorithms deactivate connections too aggressively for traffic-oblivious scenarios.
This effect becomes more apparent for $\paredges{}=5$ (and the simplex setting in particular), where both traffic-aware algorithms and \algcps{} almost always surpass the MLU-threshold of 1, with \algspnd{} reaching higher MLU values than \algspndone{} (as it was more successful in deactivating links). The MLU of \algcps{} varies wildly, even reaching values above~2.0.
In contrast, \algmcf{} stays reliably in the 0.75--0.90 range and \algmcfpp{} only slightly above that, occasionally peaking above~1. %

\subparagraph{Key Takeaways.}
The traffic-oblivious algorithm with the best compromise between many deactivated connections and acceptable MLU is \algmcfpp{}.
In comparison, \algmcf{} serves as a useful alternative when a low running time is required. %
The algorithm~\algcps{} is clearly the wrong tool for the job as it is rather slow while taking neither the
simultaneous routing of all traffic demands nor the routing via shortest paths
into account, leading to extremely high MLUs.

While unsuited for general traffic-oblivious scenarios due to high MLUs, the trivial-to-implement and blazing fast \algspndone{} is the best choice for traffic-aware scenarios, which may include scenarios were the variance between different expected traffic matrices is very small. It deactivates close-to-optimal many connections and never surpasses an MLU of~1.

The exact algorithm \algspnd{} turned out to be a worthwhile tool: only by it, we were able to establish that \algspndone{} indeed finds solutions very close to the optimum. Yet, this comparably small amount of additionally deactivated connections by \algspnd{} does not justify its exorbitant running times
and effortful implementation,
so it does not lend itself to practical scenarios.

Overall, our evaluation gives credence to network design research that uses precomputed shortest paths in the input network as a basis for their routing~\cite{DBLP:conf/lcn/OttenBSA23}.

\section{Outlook}

Our proposed ILP for \prob{MSPND} uses the simplifying assumption that the shortest path for any~$(s,t)\in K$ is unique. Practical protocols like \ac{OSPF} deal with non-unique
shortest paths by implementing \ac{ECMP}, i.e., whenever shortest paths diverge, the traffic is split up equally among them.
Modeling this in a (still manageable) ILP is extremely difficult, so
computationally less expensive strategies to deal with this problem would be desirable. One could also ask for extending our model to
\prob{MSRND}, i.e., using a segment-routing protocol. However, the fact that our current model is already at the border of what is practically feasible to compute, this seems out of reach as of yet.

\newpage

\bibliography{main}

\begin{thebibliography}{10}

\bibitem{DBLP:conf/infocom/BhatiaHKL15}
Randeep Bhatia, Fang Hao, Murali~S. Kodialam, and T.~V. Lakshman.
\newblock Optimized network traffic engineering using segment routing.
\newblock In {\em Proc. {INFOCOM} 2015}, pages 657--665. {IEEE}, 2015.
\newblock \href {https://doi.org/10.1109/INFOCOM.2015.7218434} {\path{doi:10.1109/INFOCOM.2015.7218434}}.

\bibitem{DBLP:journals/cor/BourasFPZ19}
Ikram Bouras, Rosa Figueiredo, Michael Poss, and Fen Zhou.
\newblock On two new formulations for the fixed charge network design problem with shortest path constraints.
\newblock {\em Comput. Oper. Res.}, 108:226--237, 2019.
\newblock \href {https://doi.org/10.1016/J.COR.2019.04.007} {\path{doi:10.1016/J.COR.2019.04.007}}.

\bibitem{DBLP:journals/mmor/Braess68}
Dietrich Braess.
\newblock {{\"{U}}ber ein Paradoxon aus der Verkehrsplanung}.
\newblock {\em Unternehmensforschung}, 12(1):258--268, 1968.
\newblock \href {https://doi.org/10.1007/BF01918335} {\path{doi:10.1007/BF01918335}}.

\bibitem{DBLP:reference/crc/ChimaniGJKKM13}
Markus Chimani, Carsten Gutwenger, Michael J{\"{u}}nger, Gunnar~W. Klau, Karsten Klein, and Petra Mutzel.
\newblock {The Open Graph Drawing Framework (OGDF)}.
\newblock In {\em Handbook on Graph Drawing and Visualization}, pages 543--569. Chapman and Hall/CRC, 2013.
\newblock \href {https://doi.org/10.1201/b15385} {\path{doi:10.1201/b15385}}.

\bibitem{CPSJournal}
Markus Chimani and Max Ilsen.
\newblock Directed capacity-preserving subgraphs: hardness and exact polynomial algorithms.
\newblock {\em {Acta Informatica}}, 62(10), 2025.
\newblock \href {https://doi.org/10.1007/s00236-024-00475-7} {\path{doi:10.1007/s00236-024-00475-7}}.

\bibitem{DBLP:conf/isaac/ChimaniI25}
Markus Chimani and Max Ilsen.
\newblock Traffic-oblivious multi-commodity flow network design.
\newblock In {\em Proc.\ {ISAAC} 2025}, LIPIcs, pages 19:1--19:17. Schloss Dagstuhl - Leibniz-Zentrum f{\"{u}}r Informatik, 2025.
\newblock \href {https://doi.org/10.4230/LIPIcs.ISAAC.2025.19} {\path{doi:10.4230/LIPIcs.ISAAC.2025.19}}.

\bibitem{crainic2021networkdesign}
Teodor~Gabriel Crainic, Michel Gendreau, and Bernard Gendron, editors.
\newblock {\em Network Design with Applications to Transportation and Logistics}.
\newblock Springer International Publishing, 2021.
\newblock \href {https://doi.org/10.1007/978-3-030-64018-7} {\path{doi:10.1007/978-3-030-64018-7}}.

\bibitem{DBLP:journals/telsys/Dahl93}
Geir Dahl.
\newblock The design of survivable directed networks.
\newblock {\em Telecommun. Syst.}, 2(1):349--377, 1993.
\newblock \href {https://doi.org/10.1007/BF02109865} {\path{doi:10.1007/BF02109865}}.

\bibitem{DBLP:journals/dam/Dahl93}
Geir Dahl.
\newblock Directed steiner problems with connectivity constraints.
\newblock {\em Discret. Appl. Math.}, 47(2):109--128, 1993.
\newblock \href {https://doi.org/10.1016/0166-218X(93)90086-4} {\path{doi:10.1016/0166-218X(93)90086-4}}.

\bibitem{DantzigFulkerson}
George~Bernard Dantzig and Delbert~Ray Fulkerson.
\newblock {\em On the Max-Flow Min-Cut Theorem of Networks}, pages 215--222.
\newblock Princeton University Press, 2016.
\newblock \href {https://doi.org/10.1515/9781400881987-013} {\path{doi:10.1515/9781400881987-013}}.

\bibitem{DBLP:journals/nm/Dijkstra59}
Edsger~W. Dijkstra.
\newblock A note on two problems in connexion with graphs.
\newblock {\em Numerische Mathematik}, 1:269--271, 1959.
\newblock \href {https://doi.org/10.1007/BF01386390} {\path{doi:10.1007/BF01386390}}.

\bibitem{DBLP:journals/rfc/rfc8402}
Clarence Filsfils, Stefano Previdi, Les Ginsberg, Bruno Decraene, Stephane Litkowski, and Rob Shakir.
\newblock Segment routing architecture.
\newblock {\em {RFC}}, 8402:1--32, 2018.
\newblock \href {https://doi.org/10.17487/RFC8402} {\path{doi:10.17487/RFC8402}}.

\bibitem{DBLP:journals/corr/abs-1710-08665}
Steven Gay, Pierre Schaus, and Stefano Vissicchio.
\newblock {REPETITA:} repeatable experiments for performance evaluation of traffic-engineering algorithms.
\newblock {\em CoRR}, abs/1710.08665, 2017.
\newblock \href {https://arxiv.org/abs/1710.08665} {\path{arXiv:1710.08665}}.

\bibitem{DBLP:journals/jacm/GoldbergT88}
Andrew~V. Goldberg and Robert~Endre Tarjan.
\newblock A new approach to the maximum-flow problem.
\newblock {\em J. {ACM}}, 35(4):921--940, 1988.
\newblock \href {https://doi.org/10.1145/48014.61051} {\path{doi:10.1145/48014.61051}}.

\bibitem{DBLP:journals/cor/SilvaSMMS16}
Pedro~Henrique Gonz{\'{a}}lez, Luidi Simonetti, Philippe Michelon, Carlos~Alberto de~Jesus~Martinhon, and Edcarllos Santos.
\newblock A variable fixing heuristic with local branching for the fixed charge uncapacitated network design problem with user-optimal flow.
\newblock {\em Comput. Oper. Res.}, 76:134--146, 2016.
\newblock \href {https://doi.org/10.1016/J.COR.2016.06.016} {\path{doi:10.1016/J.COR.2016.06.016}}.

\bibitem{DBLP:conf/cp/HartertSVB15}
Renaud Hartert, Pierre Schaus, Stefano Vissicchio, and Olivier Bonaventure.
\newblock Solving segment routing problems with hybrid constraint programming techniques.
\newblock In {\em Proc. {CP} 2015}, volume 9255 of {\em LNCS}, pages 592--608. Springer, 2015.
\newblock \href {https://doi.org/10.1007/978-3-319-23219-5_41} {\path{doi:10.1007/978-3-319-23219-5_41}}.

\bibitem{DBLP:conf/sigcomm/HartertVSBFTF15}
Renaud Hartert, Stefano Vissicchio, Pierre Schaus, Olivier Bonaventure, Clarence Filsfils, Thomas Telkamp, and Pierre Fran{\c{c}}ois.
\newblock A declarative and expressive approach to control forwarding paths in carrier-grade networks.
\newblock In {\em Proc.\ {SIGCOMM} 2015}, pages 15--28. {ACM}, 2015.
\newblock \href {https://doi.org/10.1145/2785956.2787495} {\path{doi:10.1145/2785956.2787495}}.

\bibitem{DBLP:conf/infocom/HassidimRSS13}
Avinatan Hassidim, Danny Raz, Michal Segalov, and Ariel Shaqed.
\newblock Network utilization: The flow view.
\newblock In {\em Proc.\ {INFOCOM}}, pages 1429--1437. {IEEE}, 2013.
\newblock \href {https://doi.org/10.1109/INFCOM.2013.6566937} {\path{doi:10.1109/INFCOM.2013.6566937}}.

\bibitem{DBLP:journals/corr/abs-2601-13087}
Max Ilsen, Daniel Otten, Nils Aschenbruck, and Markus Chimani.
\newblock No traffic to cry: Traffic-oblivious link deactivation for green traffic engineering.
\newblock {\em CoRR}, abs/2601.13087, 2026.
\newblock Accepted at INFOCOM 2026.
\newblock \href {https://arxiv.org/abs/2601.13087} {\path{arXiv:2601.13087}}.

\bibitem{DBLP:books/daglib/0023873}
Michael J{\"{u}}nger, Thomas~M. Liebling, Denis Naddef, George~L. Nemhauser, William~R. Pulleyblank, Gerhard Reinelt, Giovanni Rinaldi, and Laurence~A. Wolsey, editors.
\newblock {\em 50 Years of Integer Programming 1958-2008 - From the Early Years to the State-of-the-Art}.
\newblock Springer, 2010.
\newblock \href {https://doi.org/10.1007/978-3-540-68279-0} {\path{doi:10.1007/978-3-540-68279-0}}.

\bibitem{DBLP:journals/jsac/KnightNFBR11}
Simon Knight, Hung~X. Nguyen, Nick Falkner, Rhys~Alistair Bowden, and Matthew Roughan.
\newblock The internet topology zoo.
\newblock {\em {IEEE} J. Sel. Areas Commun.}, 29(9):1765--1775, 2011.
\newblock \href {https://doi.org/10.1109/JSAC.2011.111002} {\path{doi:10.1109/JSAC.2011.111002}}.

\bibitem{DBLP:journals/rfc/rfc2328}
John Moy.
\newblock {OSPF} version 2.
\newblock {\em {RFC}}, 2328:1--244, 1998.
\newblock \href {https://doi.org/10.17487/RFC2328} {\path{doi:10.17487/RFC2328}}.

\bibitem{DBLP:conf/lcn/OttenBSA23}
Daniel Otten, Alexander Brundiers, Timmy Sch{\"{u}}ller, and Nils Aschenbruck.
\newblock Green segment routing for improved sustainability of backbone networks.
\newblock In {\em Proc.\ {LCN} 2023}, pages 1--9. {IEEE}, 2023.
\newblock \href {https://doi.org/10.1109/LCN58197.2023.10223317} {\path{doi:10.1109/LCN58197.2023.10223317}}.

\bibitem{DBLP:journals/ojcs/OttenICA26}
Daniel Otten, Max Ilsen, Markus Chimani, and Nils Aschenbruck.
\newblock An extended look at green traffic engineering by minimizing active linecards.
\newblock {\em {IEEE} Open J. Commun. Soc.}, 7:2794--2813, 2026.
\newblock \href {https://doi.org/10.1109/OJCOMS.2026.3673375} {\path{doi:10.1109/OJCOMS.2026.3673375}}.

\bibitem{DBLP:journals/ton/SchullerACH21}
Timmy Sch{\"{u}}ller, Nils Aschenbruck, Markus Chimani, and Martin Horneffer.
\newblock Failure resiliency with only a few tunnels - enabling segment routing for traffic engineering.
\newblock {\em {IEEE/ACM} Trans. Netw.}, 29(1):262--274, 2021.
\newblock \href {https://doi.org/10.1109/TNET.2020.3030543} {\path{doi:10.1109/TNET.2020.3030543}}.

\bibitem{DBLP:conf/lcn/SchullerACHS17}
Timmy Sch{\"{u}}ller, Nils Aschenbruck, Markus Chimani, Martin Horneffer, and Stefan Schnitter.
\newblock Predictive traffic engineering with 2-segment routing considering requirements of a carrier {IP} network.
\newblock In {\em Proc.\ {LCN}}, pages 667--675, 2017.

\bibitem{DBLP:conf/esa/SigurdZ04}
Mikkel Sigurd and Martin Zachariasen.
\newblock Construction of minimum-weight spanners.
\newblock In {\em Proc.\ {ESA} 2004}, Lecture Notes in Computer Science, pages 797--808. Springer, 2004.
\newblock \href {https://doi.org/10.1007/978-3-540-30140-0_70} {\path{doi:10.1007/978-3-540-30140-0_70}}.

\bibitem{DBLP:journals/ton/SpringMWA04}
Neil~T. Spring, Ratul Mahajan, David Wetherall, and Thomas~E. Anderson.
\newblock Measuring {ISP} topologies with rocketfuel.
\newblock {\em {IEEE/ACM} Trans. Netw.}, 12(1):2--16, 2004.
\newblock \href {https://doi.org/10.1109/TNET.2003.822655} {\path{doi:10.1109/TNET.2003.822655}}.

\end{thebibliography}

\end{document}